\numberwithin{equation}{section}
\theoremstyle{plain}
\newtheorem{thm}{Theorem}[section]
\newtheorem{prop}[thm]{Proposition}
\theoremstyle{definition}
\theoremstyle{remark}
\newtheorem{rem}[thm]{Remark}
\newcommand{\rmd}{\mathrm{d}}
\newcommand{\Ric}{\mathrm{Ric}}
\newcommand{\R}{\mathrm{R}}
\newcommand{\LKone}{A}
\newcommand{\LKtwo}{M}
\begin{document}

\title[Stability of Asymptotic Behavior Within Polarised $\Tbb^2$-Symmetric Vacuum Solutions with Cosmological Constant]{Stability of Asymptotic Behavior Within Polarised $\Tbb^2$-Symmetric Vacuum Solutions with Cosmological Constant}

\author[E. Ames]{Ellery Ames}
\address{Dept. of Mathematics \\ 
Humboldt State University \\
1 Harpst St.
Arcata, CA 95521\\USA }
\email{Ellery.Ames@humboldt.edu}

\author[F. Beyer]{Florian Beyer}
\address{Dept. of Mathematics and Statistics\\
730 Cumberland St\\
University of Otago, Dunedin 9016\\ New Zealand}
\email{fbeyer@maths.otago.ac.nz }

\author[J. Isenberg]{James Isenberg}
\thanks{J. Isenberg is partially supported by NSF grant PHY-1707427.}
\address{Dept. of Mathematics and Institute for Fundamental Science \\
University of Oregon \\
Eugene, OR 97403 USA}
\email{isenberg@uoregon.edu}

\author[T.A. Oliynyk]{Todd A. Oliynyk}
\address{School of Mathematical Sciences\\
9 Rainforest Walk\\
Monash University, VIC 3800\\ Australia}
\email{todd.oliynyk@monash.edu}

\begin{abstract}
  We prove the nonlinear stability of the asymptotic behavior of perturbations of subfamilies of Kasner solutions in the contracting time direction within the class of polarised $\Tbb^2$-symmetric solutions of the vacuum Einstein equations with arbitrary cosmological constant $\Lambda$. 
  This stability result generalizes the results proven in \cite{abio2020}, which focus on the $\Lambda=0$ case, and as in that article, the proof relies on an areal time foliation and Fuchsian techniques.   
  Even for $\Lambda=0$, the results established here apply to a wider class of perturbations of Kasner solutions within the family of polarised $\Tbb^2$-symmetric vacuum solutions than those considered in \cite{abio2020} and \cite{fournodavlos2020b}.
  Our results establish that the areal time coordinate takes all values in $(0, T_0]$ for some $T_0 > 0$, for certain families of polarised $\Tbb^2$-symmetric solutions with cosmological constant.  
\end{abstract}

\maketitle

\section{Introduction}
\label{s.introduction}
The \emph{Einstein vacuum equations} with \emph{cosmological constant} $\Lambda$ take the form
\begin{equation}
  \label{eq:EFE}
  G + \Lambda g = 0
\end{equation}
where $G= \Ric + \frac 12 \R g$ is the Einstein tensor of a Lorentzian metric $g$, and $\Ric$ and $\R$ are the Ricci tensor and scalar curvature of $g$, respectively. 
In this article, we focus on \emph{cosmological solutions} $(M,g)$ of \eqref{eq:EFE}, which are characterised by the existence of compact Cauchy hypersurfaces $\Sigma$ in the spacetime manifold $M$. 
Without loss of generality, this allows us to assume that $M$ equals $I\times\Sigma$ where $I$ is some time interval. 
In this article, we restrict our attention to cosmological spacetimes with Cauchy hypersurfaces of the form $\Sigma=\Tbb^3$, and with a $\Tbb^2$ isometry group acting on the spacetime which preserves the Cauchy hypersurfaces.

One of the outstanding problems in mathematical cosmology is to characterise the dynamics of cosmological solutions near \emph{big bang singularities}. 
It is known that the family of spatially homogeneous and isotropic Friedmann-Lema\^itre-Robertson-Walker (FLRW) solutions generically develop curvature singularities after a finite time in the contracting time direction, which are referred to as big bang singularities, both in vacuum and for a wide range of matter models. 
The conjecture that {generic} cosmological solutions develop singularities of big bang type is supported by the \emph{Penrose and Hawking singularity theorems} \cite{hawkingLargeScaleStructure1973} as well as the \emph{BKL conjecture} \cite{belinskii1970,lifshitz1963}. 
In contrast to FLRW dynamics, generic classes of cosmological solutions are expected to exhibit complicated dynamics near big bang singularities where  congruences of timelike observers experience ``chaotic mixmaster'' type oscillatory behaviour \cite{Andersson:2005,Weaver:2001}.

A mathematically rigorous analysis of mixmaster dynamics is an open problem and appears to be currently out of reach.
Fortunately, there is strong evidence that mixmaster behaviour simplifies to less chaotic blow up dynamics, commonly referred to as \emph{asymptotically velocity term dominated behaviour} (AVTD) \cite{Eardley:1972, Isenberg:1990}, in the following three situations: (i) the Einstein's equations are coupled to ``extreme'' matter fields, e.g., minimally coupled scalar fields or stiff fluids \cite{andersson2001,heinzle2012,rodnianski2014, fournodavlos2020b,fournodavlos2021}; (ii) the number of spatial dimensions $D$ is large, i.e., $D\ge 11$ \cite{damour2002}); or (iii) certain symmetries are present \cite{Isenberg:1990,berger1998b,isenberg1999,rendall2000,isenberg2002,stahl2002,choquet-bruhat2004,choquet-bruhat2006,ringstrom2009a,ames2017,beyer2017}. 
In this article, we focus our attention on the third situation and consider \emph{polarised $\Tbb^2$-symmetric classes} of cosmological solutions.

AVTD behaviour is characterised by the existence of a congruence of time-like observers who experience spatially pointwise {Kasner type} behaviour in the contracting time direction. 
While we do not fully establish the existence of solutions with AVTD behaviour in this article%
\footnote{We believe that it should be straightforward to modify the arguments of \cite{abio2020} to establish that AVTD behaviour occurs in the solutions considered in this paper.}, 
we do establish the existence of solutions that exhibit \emph{spatially pointwise Kasner asymptotics} and curvature blow up. 
The main result of this article --- see Theorem~\ref{thm.main_theorem} for a precise statement --- is that we construct open neighborhoods of $\Tbb^2$-symmetric vacuum solutions that contain a range of exact Kasner solutions \cite{kasner1921}. 
Moreover, we establish that these vacuum solutions have the expected dynamics near the singularity, which is quantified via rigorous estimates, and are $C^2$-inextendible due to curvature blow up; i.e., they terminate at a big bang singularity. 
It is interesting to note that, in agreement with heuristic arguments, we find that the cosmological constant does not enter the leading-order estimates near the singularity.
Nevertheless, our results indicate that a non-zero cosmological constant, irrespective of its sign, renders certain otherwise stable Kasner solutions unstable. 
As a consequence, the standard picture of dynamics near the singularity in the presence of a cosmological constant may need to be revised.

We further emphasize that for the $\Tbb^2$-symmetric vacuum solutions – with or without a non-vanishing cosmological constant – the polarization condition that we assume here (see Section~\ref{sec:polarisedT2}) is crucial for AVTD behaviour to occur. 
Numerical studies of $\Tbb^2$-symmetric solutions of the vacuum Einstein equations without the polarization condition and with non-vanishing twist (defined below in Section~\ref{s.T2spacetimes.areal_gauge}) strongly indicate that such solutions generically exhibit mixmaster-like behaviour in a neighborhood of the singularity \cite{Andersson:2005,Weaver:2001} rather than AVTD behaviour.

Two distinct approaches have been employed to investigate the dynamics near the big bang singularity in the mathematical cosmology literature.
The first, an approach based on a \emph{singular initial value problem} for evolution equations in Fuchsian form \cite{ames2013a,beyer2010b,beyer2020d,claudel1998a,kichenassamy1998,kichenassamy2007k,rendall2000}, aims to establish the existence of families of singular solutions of the Einstein-matter or Einstein-vacuum equations  by prescribing asymptotics near the singularity and then solving the equations in the direction away from the singularity. 
While this approach yields infinite-dimensional families of solutions parameterized by free ``asymptotic data'' functions, it does not address whether the family of solutions constructed is open in the set of all solutions under consideration, for example, within a given symmetry class or with certain matter fields. 
To obtain information regarding the behaviour of open sets of solutions, an alternative approach is needed. 
This second approach relies on solving the Cauchy problem for initial data sets specified on a Cauchy hypersurface near the singularity, and studying the behaviour of these solutions as they evolve towards the singularity. 

The second approach, based on evolving initial data from a Cauchy hypersurface towards the singularity, has been used to show that, under the assumption that the cosmological constant vanishes, AVTD behaviour occurs for vacuum solutions of the Einstein equations in the following situations: generically in the class of vacuum Gowdy solutions \cite{CIM1990,Isenberg:1990,ringstrom2009a}, near exact Kasner solutions in the polarised $\Tbb^2$-symmetric setting \cite{abio2020}, and near exact Kasner solutions in the polarised $U(1)$-symmetric setting \cite{fournodavlos2020b}. 
The works \cite{fournodavlos2020b,Rodnianski2018HighD,rodnianski2014} establish  corresponding results for cosmological solutions without symmetries if the Einstein's equations are either coupled to a scalar field or if the number of spatial dimensions is sufficiently large.
On the other hand, using the first approach, infinite dimensional subfamilies of solutions with AVTD behaviour are known in symmetry-defined classes of spacetimes \cite{ames2013a,beyer2017,choquet-bruhat2006,choquet-bruhat2004,Clausen2007,damour2002,isenberg1999,isenberg2002,kichenassamy1998,stahl2002}, in spacetimes without symmetry and coupled to a stiff fluid \cite{andersson2001,heinzle2012}, and in special classes of vacuum spacetimes without symmetry \cite{ChruscielKlinger:2015,Fournodavlos:2020}.
Finally, we mention other recent related work \cite{Lott:2020b,Lott:2020a,ringstrom2017,Ringstrom:2021a,Ringstrom:2021b} that establish geometric conditions under which one obtains detailed information about 
the geometry of certain spacetimes near their singularities.

In this article, we follow the second approach, and in particular, we employ techniques based on the formulation of the Einstein evolution equations in symmetric hyperbolic Fuchsian form to establish the nonlinear stability of Kasner solutions towards the singularity. 
It is important to note that the Fuchsian techniques we use are based on solving a Cauchy problem and not a singular initial value problem, which is where Fuchsian methods have been traditionally employed. 
The use of techniques relying on the Fuchsian formulation of the evolution equations to obtain global existence results for Cauchy problems was initiated in \cite{Oliynyk:CMP_2016}, and since then, this method has been further developed and employed in a variety of settings to establish the global existence of solutions to Cauchy problems for various field theories
\cite{FOW:2021,LeFlochWei:2021,LiuOliynyk:2018b,LiuOliynyk:2018a,LiuWei:2021,Oliynyk:2021,OliynykOlvera:2021}.

The
main result of this article --- Theorem~\ref{thm.main_theorem} below --- is the nonlinear stability of a family of Kasner solutions within the polarised $\Tbb^2$-symmetric spacetimes with cosmological constant. 
Similar results are proved in recent works of \cite{abio2020} and \cite{fournodavlos2020b}.
All of these works, the current one included, establish the dynamical stability of Kasner solutions within classes of cosmological solutions --- polarised $\Tbb^2$-symmetry in this article and \cite{abio2020}, and polarised $U(1)$-symmetry in \cite{fournodavlos2020b}. 
However, the present result generalizes the results of \cite{abio2020,fournodavlos2020b} in the following ways.
First, the present result allows for an arbitrary cosmological constant $\Lambda\in \Rbb$ while \cite{abio2020} and \cite{fournodavlos2020b} are restricted to $\Lambda=0$. 
Second, the present result applies to a wider class of perturbed solutions. 
In the results of both \cite{fournodavlos2020b} and \cite{abio2020}%
\footnote{We note that the perturbed solutions discussed in \cite{abio2020} are a subset of those discussed in \cite{fournodavlos2020b}, while the asymptotic results of \cite{abio2020} are sharper than those of \cite{fournodavlos2020b}.}, 
the perturbed solutions are restricted to those with the hypersurface orthogonal Killing vector field close to parallel to one of the geometrically distinguished eigenvector fields of the given background Kasner solution.
By using a more general family of areal gauges, in the present article we remove this restriction and allow for perturbed solutions which are geometrically distinguished from those in \cite{abio2020,fournodavlos2020b}.
As explained below in Section~\ref{s.T2spacetimes}, the hypersurface orthogonal Killing vector field of the present perturbed solutions may be a linear combination of the eigenvector fields of the background Kasner solution.

\bigskip

\noindent \textit{Overview:} 
In Section~\ref{s.T2spacetimes}, we define the polarised $\Tbb^2$-symmetric cosmological solutions and the areal coordinate gauge. 
A key feature of the areal coordinate gauge is that the time coordinate $t$ synchronizes the singularity at $t=0$, and as $t$ increases, the spacetime expands. 
As a consequence, our analysis of  polarised $\Tbb^2$-symmetric spacetimes near their singularities amounts to analyzing polarised $\Tbb^2$-symmetric vacuum solutions to the Einstein equations in the areal gauge for positive $t$ close to $0$. 
Next, we introduce, in Section~\ref{s.main_result}, the family of Kasner solutions of the vacuum Einstein equations with an arbitrary cosmological constant $\Lambda$ \cite{Garfinkle2021} --- see also Section~\ref{sec:Kasnersol} in the appendix --- and we provide a rigorous statement of our main stability result in Theorem~\ref{thm.main_theorem}. 
 The proof of this theorem is carried out in Section~\ref{s.main_proof}.

\section{$\Tbb^2$-symmetric vacuum spacetimes}
\label{s.T2spacetimes}

\subsection{$\Tbb^2$-symmetry and areal gauge}
\label{s.T2spacetimes.areal_gauge}
A $\Tbb^2$-symmetric cosmological spacetime $(M,g)$ is by definition globally hyperbolic and characterised by the existence of an effective smooth action of 
the isometry group $U(1)\times U(1)\cong\Tbb^2$ \cite{CHRUSCIEL:1990}, which preserves the Cauchy hypersurfaces of the spacetime. 
The corresponding Lie algebra of Killing vector fields is spanned by two commuting spacelike Killing vector fields $X$ and $Y$ with closed orbits. 
In this article, we restrict to spacetime manifolds of the form $M=(0, \infty)\times\Tbb^3$. 

We recall that a $\Tbb^2$-symmetric metric $g$ in \emph{areal coordinates} $(t,\theta,x,y)$ \cite{BERGER1997,CHRUSCIEL:1990} on $(0, \infty)\times\Tbb^3$ takes the
form 
\begin{equation}
\label{T2metric}
    g = e^{2(\nu - u)} \left( -\alpha\rmd t^2 + \rmd \theta^2 \right)
    + e^{2u}\left(\rmd x + Q \rmd y +  (G + QH) \rmd \theta \right)^2
    + e^{-2u} t^2\left(\rmd y + H \rmd \theta\right)^2,
  \end{equation}
where the fields $\nu$, $u$, $Q$, $\alpha$, $G$, $H$ are functions of the coordinates $t$ and $\theta$ only. 
This representation of the metric assumes the following gauge choices: (i) the coordinate vector fields $\partial_x$ and $\partial_y$ are identified with a basis of the Lie algebra of Killing vector fields, that is,
$X=\partial_x$ and $Y=\partial_y$, (ii) the time coordinate $t$ is
\emph{areal}, that is,
\[t^2=\det
  \begin{pmatrix}
    g_{xx} & g_{yx}\\
    g_{xy} & g_{yy}
  \end{pmatrix},\]
 and (iii) the \emph{shift vector field} vanishes identically, which is characterised by
 \[g_{t\theta}=g_{tx}=g_{ty}=0.\]

In general, the areal coordinates are only defined locally, and in particular, for $t$ on finite intervals of the form $(t_0,t_1)$ with $0< t_0 < t_1$. However, if $\Lambda=0$, it has been shown that solutions of \eqref{eq:EFE} in areal coordinates can always be extended to solutions on the whole interval $t\in (0,\infty)$ \cite{BERGER1997,IsenbergWeaver:2003}. 
In the nonpolarised case (see definition of polarised below), Smulevici has shown \cite[Thm.~2]{Smulevici:2011} that $\Tbb^2$-symmetric spacetimes with positive cosmological constant  are covered by an areal foliation with $t\in (0,\infty)$. 
If $\Lambda \neq 0$ and the spacetime is polarised, there are no general results in the literature for vacuum spacetimes regarding the lower bound $t_0$ of the areal time coordinate. 
It is interesting then to note that we identify, in this article, a class of polarised $\Tbb^2$-symmetric solutions of \eqref{eq:EFE} with arbitrary $\Lambda\in\Rbb$ on intervals of the form $t\in (0,T_0]$ for some $T_0>0$. 

To each of the two Killing vector fields $X$ and $Y$ of $g$, there corresponds a twist one-form defined by
\begin{equation}
  \label{eq:twistoneforms}
  \omega^{(X)}_d=\epsilon_{abcd}(\nabla^a X^b)X^c \AND
  \omega^{(Y)}_d=\epsilon_{abcd}(\nabla^a Y^b)Y^c,
\end{equation}
respectively, where $\epsilon_{abcd}$ is the fully skew-symmetric Levi-Civita symbol.
Given these, the scalar fields
\begin{align}
  \label{eq:twistconstantX}
  J_{X} =& \omega^{(X)}_dY^d
           = -t e^{4u - 2\nu} \alpha^{-1/2} (\del{t} G + Q \del{t}H), \\
  \label{eq:twistconstantY}
  J_{Y} =& -\omega^{(Y)}_dX^d
        = -t e^{-2\nu} \alpha^{-1/2} (t^2 \del{t} H + Q e^{4u}(\del{t} G + Q \del{t}H))
        = QJ_{X} - e^{-2\nu} \alpha^{-1/2} t^3 \del{t}H,
  \end{align} 
  are defined as the \emph{twist constants} of $g$. 
  These names are motivated by the fact \cite{CHRUSCIEL:1990} that $J_X$ and $J_Y$ are constant for $\Tbb^2$-symmetric solutions of the vacuum Einstein equations \eqref{eq:EFE} with arbitrary cosmological constant.
It can also be shown that the twist one-forms are given by
  \begin{align}
    \label{eq:twistoneforms1}
    \omega^{(X)}=&
                  \frac{\del{\theta}Q \sqrt{\alpha } e^{4 u}}{t} \rmd t
                  +\left({J_X} H+\frac{\del{t}Q e^{4
                  u}}{t \sqrt{\alpha }}\right)\rmd\theta
                   + {J_X} \rmd y
                = \frac{\sqrt{\alpha}}{t}e^{2u+2\nu}i_{\grad Q}\rmd t\wedge \rmd\theta
                  + J_{X}(\rmd y + H \rmd \theta)
  \end{align}
  \begin{align}                  
    \begin{split}                  
    \label{eq:twistoneforms2}
    \omega^{(Y)}=&
                  -\frac{\sqrt{\alpha } \left(\del{\theta}Q \left(t^2-Q^2 e^{4 u}\right)+4 t^2 Q
                  \del{\theta}u\right)}{t} \rmd t \\
                 &+\frac{\del{t}Q Q^2 e^{4 u}+2 t Q \left(1-2 t \del{t}u\right) -t \left({J_Y} G \sqrt{\alpha }+t \del{t}Q(t,\theta
                  )\right)}{t \sqrt{\alpha
                  }} \rmd\theta
                  -{J_Y} \rmd x \\
                =& \frac{-\sqrt{\alpha}}{t}e^{2\nu - 2u}
                  \biggl((e^{4u}Q^2 -t^2) i_{\grad Q}\rmd t\wedge\rmd \theta
                      - 4t^2Qi_{\grad u}\rmd t\wedge\rmd \theta
                  \biggr) 
                  - J_{Y}(\rmd x + G\rmd \theta)
                  + \frac{2Q}{\sqrt{\alpha}}\rmd \theta,
    \end{split}                  
  \end{align}
 where $i_{\grad F}\rmd t\wedge\rmd \theta = -\alpha^{-1}e^{2u-2\nu} (\alpha\del{\theta}F \rmd t + \del{t}F \rmd \theta)$ for any chosen function $F(t,\theta)$. 

 \newcommand{\sout}{{\tilde t}}
 \newcommand{\southeta}{{\thetat}}
 \newcommand{\soux}{\xt}
 \newcommand{\souy}{\yt}
 \newcommand{\sounu}{\nut}
 \newcommand{\souu}{\ut}
 \newcommand{\soualpha}{\alphat}
 \newcommand{\souQ}{\Qt}
 \newcommand{\souG}{\Gt}
 \newcommand{\souH}{\Ht}
 \newcommand{\tart}{{\tilde t}}
 \newcommand{\tartheta}{{\thetat}}
 \newcommand{\tarx}{\xt}
 \newcommand{\tary}{\yt}
 \newcommand{\tarnu}{\nut}
 \newcommand{\taru}{\ut}
 \newcommand{\tarQ}{\Qt}
 \newcommand{\taralpha}{\alphat}
 \newcommand{\tarG}{\Gt}
 \newcommand{\tarH}{\Ht}

It is important to note that the form \eqref{T2metric} of a $\Tbb^2$-symmetric metric in areal coordinates is \emph{invariant} under the following family of coordinate transformations
\begin{equation}
  \label{eq:coordtrafo}
  (\tilde t,\thetat,\xt,\yt)=(a \,t,\theta, a_{11}x+a_{12}y,
  a_{21}x+a_{22}y), \quad a=|\det A|,
\end{equation}
for any constant matrix
\begin{equation}
  \label{eq:coordtrafo2}
  A=
  \begin{pmatrix}
    a_{11} & a_{12}\\
    a_{21} & a_{22}
  \end{pmatrix}\in GL(2,\Rbb).
\end{equation}

By \emph{invariant}, we mean that the metric \eqref{T2metric}, if written in terms of the coordinates \eqref{eq:coordtrafo}, takes on the same form; that is, 
\begin{equation}
\label{T2metrictafo}
  {g} = e^{2(\nut - \ut)} \left( -\alpha\rmd {\tilde t}^2 + \rmd      \thetat^2 \right)
    + e^{2\ut}\left(\rmd \xt + \Qt \rmd \yt +  (\Gt + \Qt \Ht) \rmd \thetat \right)^2
    + e^{-2\ut} {\tilde t}^2\left(\rmd \yt + \Ht \rmd \thetat\right)^2.
  \end{equation}
For use below, we note, as can be verified by a straightforward calculation, that the metric function $Q$
transforms as
\begin{equation}
  \label{eq:Qtransform}
Q(t,\theta)=\frac{(a_{11}+a_{12}\Qt(\tilde t,\thetat)) (a_{21}+a_{22}\Qt(\tilde t,\thetat)) +a_{12}a_{22} e^{-4\ut(\tilde t,\thetat)}{\tilde t}^2}
  {(a_{11}+a_{12}\Qt(\tilde t,\thetat))^2 +a_{12}^2 e^{-4\ut(\tilde t,\thetat)}{\tilde t}^2},
\end{equation}
while the Killing bases associated to the two coordinate systems, i.e., $\{\Xt=\del{\xt},\Yt=\del{\yt}\}$ and $\{X=\del{x}, Y=\del{y}\}$, and their corresponding twist constants
are related via
\begin{equation}
  \label{eq:KVFTrafo}
  \begin{pmatrix}
    X\\ Y
  \end{pmatrix}=A \begin{pmatrix}
    \Xt\\ \Yt
  \end{pmatrix}  
\end{equation}
and
  \begin{equation}
    \label{eq:Jtransform}    
  \begin{pmatrix}
    J_{\Xt}\\ J_{\Yt}
  \end{pmatrix}=(\det A) A \begin{pmatrix}
    J_X\\ J_Y
  \end{pmatrix}.
\end{equation}

\subsection{Polarised $\Tbb^2$-symmetric vacuum spacetimes\label{sec:polarisedT2}}
A $\Tbb^2$-symmetric metric \eqref{T2metrictafo} is called \emph{polarised} if there exists a coordinate transformation \eqref{eq:coordtrafo}-\eqref{eq:coordtrafo2} that results in $Q$, given by \eqref{eq:Qtransform}, vanishing.
For such metrics, the Killing vector field basis $\{X,Y\}$ defined by \eqref{eq:KVFTrafo} is orthogonal everywhere on $M$. 
Now, if a polarised $\Tbb^2$-symmetric metric $g$ defines a solution of the vacuum Einstein equations \eqref{eq:EFE} and is in the form \eqref{T2metric} with $Q=0$, which is always possible by definition, then it can be shown that the twist constants \eqref{eq:twistconstantX}-\eqref{eq:twistconstantY} associated to the Killing basis $\{X,Y\}$ have the property that at least one of them must be zero.
By applying another transformation \eqref{eq:coordtrafo}-\eqref{eq:coordtrafo2} with $a_{11}=0$, $a_{21}=a_{12}=1$ and $a_{22}=0$ if necessary, it is always possible to transform the metric into the form \eqref{T2metric} with $Q=0$, $J_X=0$ and $J_Y\in\Rbb$. We call this  the \emph{$Q_0$-areal gauge} for polarised $\Tbb^2$-symmetric solutions of vacuum Einstein equations \eqref{eq:EFE}. We also note that if the twist constants happen to satisfy  $J_X=J_Y=0$, then the metric $g$ is referred to as a \emph{polarised Gowdy metric} \cite{CIM1990,CHRUSCIEL:1990,Gowdy1974,Isenberg:1990}.

While most investigations of polarised $\Tbb^2$-symmetric solutions of the Einstein equations in the literature \cite{isenberg1999,Clausen2007,ames2013a,abio2020} employ the $Q_0$-areal gauge, we find that this gauge is too restrictive for our purposes, and so we consider here more general gauges.
Given a polarised $\Tbb^2$-symmetric metric $g$ in $Q_0$-areal gauge, i.e., \eqref{T2metrictafo} with $\Qt=0$, the \emph{class of all areal coordinate representations} of $g$ is constructed from the class of all coordinate transformations \eqref{eq:coordtrafo}- \eqref{eq:coordtrafo2}.
The corresponding metric functions $Q$ are then of the form
\begin{equation}
  \label{eq:QtransformPol}
  Q(t,\theta)=\frac{a_{11}a_{21}+a_{12}a_{22} e^{-4\ut(\tilde t,\thetat)}{\tilde t}^2}
  {a_{11}^2 +a_{12}^2 e^{-4\ut(\tilde t,\thetat)}{\tilde t}^2},
\end{equation}
which follows from setting $\Qt=0$ in \eqref{eq:Qtransform}.

In the following, we need not consider the whole class of areal coordinate representations, and instead, we find it sufficient to restrict to the special class for which $Q$ is \emph{constant} in both space and time but \emph{not necessarily zero}; we refer to these coordinate systems as \emph{$Q_{\text{const}}$-areal gauges}. 
In fact, we further specialise to $Q_{\text{const}}$-areal gauges that are obtained from
a $Q_0$-areal gauge in one of the following two ways:
\begin{enumerate}
\item [(1):] Choose $Q\in\Rbb$, fix arbitrary non-vanishing $a_{12}$, $a_{21}$ in $\Rbb$, and set
\[
  A=
  \begin{pmatrix}
      0 & a_{12}\\
      a_{21} & Q a_{12}\\
  \end{pmatrix}.
\]
Then it follows from \eqref{eq:Jtransform} that
\begin{equation}
  \label{eq:TwistConstants.1}
    J_{X}\in\Rbb \AND J_{Y}=Q  J_{X},
  \end{equation}
and moreover, it can be shown, for $J_X\neq 0$, that
\begin{equation}  
  \label{eq:hypersurfaceorthogonalKVF.1}
  Y-Q X
\end{equation}
is the unique,  up to rescaling, hypersurface orthogonal Killing vector field of $g$. 
\item [(2):] Choose $Q\in\Rbb$, fix arbitrary non-vanishing $a_{11}$, $a_{22}$ in $\Rbb$, and set
\[A=
    \begin{pmatrix}
      a_{11} & 0\\
      Q a_{11}& a_{22}\\
    \end{pmatrix}.
\]
In this case, it follows from  \eqref{eq:Jtransform}  that
\begin{equation}
  \label{eq:TwistConstants.2}
    J_{X}=0 \AND J_{Y}\in\Rbb,
  \end{equation}
and it can be shown, for $J_Y\neq 0$, that
\begin{equation}
  \label{eq:hypersurfaceorthogonalKVF.2}
X
 \end{equation}
is the unique, up to rescaling, hypersurface orthogonal Killing vector field of $g$.
\end{enumerate}
These two $Q_{\text{const}}$-areal gauges yield polarised $\Tbb^2$-symmetric solutions of the vacuum Einstein equation \eqref{eq:EFE} of the form \eqref{T2metric} where $Q$ is constant and the twist constants \eqref{eq:twistconstantX} and \eqref{eq:twistconstantY} have the property
\begin{equation}
  \label{eq:twistconstgaugeres}
  \text{$J_X=0$ and  $P\in\Rbb$,} \quad \text{or}\quad \text{$P=0$ and  $J_X\in\Rbb$},\quad \text{or} \quad J_X=P=0
\end{equation}
where $P$ is defined by
\begin{equation}
  P:=J_Y-QJ_X.
\end{equation}
In the $Q_0$-areal gauge used in \cite{abio2020}, in which $Q= J_X = 0$ and $X$ is the hypersurface orthogonal Killing vector field, the transformation (2) above is a simply a rescaling, while transformation (1) flips the coordinate basis Killing vector fields. 
In this gauge, both transformations result in exactly one of the coordinate basis Killing vector fields being hypersurface orthogonal (provided there is one non-vanishing twist). 
By generalizing to the $Q_{\text{const}}$-areal gauges we allow spacetimes in which the unique hypersurface orthogonal Killing vector field is not aligned with one of the coordinate basis Killing vector fields, and hence with an eigenvector field of the background geometry.
The case $J_X=P=0$ corresponds to the well-studied polarised Gowdy metrics.

\subsection{The Einstein vacuum equations for polarised $\Tbb^2$-symmetric spacetimes}
For arbitrary $\Lambda\in\Rbb$, $Q\in\Rbb$, $J_X\in\Rbb$ and $P\in\Rbb$, it is straightforward to show that the Einstein vacuum equations \eqref{eq:EFE} are equivalent to the following system of equations for a polarised $\Tbb^2$-symmetric metrics in a $Q_{\text{const}}$-areal gauge:
  \begin{align}
    \label{eq:vacevolFirst}
    \del{tt} u +\frac{\del{t} u}{t} -\alpha \del{\theta\theta}u
    &=\frac{1}{2} \del{\theta}\alpha \del{\theta} u
    -\frac{{J_X^2} \alpha  \left(t \del{t} u-1\right) e^{2 \nu -4 u}}{2 t^2}
    -\frac{P^2 \alpha  e^{2 \nu } \del{t} u}{2 t^3}
    -\Lambda  \alpha \left(2 t \del{t} u-1\right) e^{2 \nu -2 u},\\
    \del{t}\nu
    &=t\alpha  (\del{\theta} u)^2+t(\del{t} u)^2
    +\frac{{J_X^2} \alpha  e^{2 \nu -4 u}}{4 t}
    +\frac{P^2 \alpha e^{2 \nu }}{4 t^3}    
      +\Lambda  t \alpha  e^{2 \nu -2 u},\\
    \label{eq:vacevolalpha}
    \del{t}\alpha
    &=
    -\frac{{J_X^2} \alpha ^2 e^{2 \nu -4 u}}{t}
    -\frac{P^2 \alpha ^2 e^{2 \nu }}{t^3}
    -4 \Lambda  t \alpha ^2 e^{2 \nu -2 u},\\
    \partial_t G
    &=-\frac{{J_X} \sqrt{\alpha } e^{2 \nu -4 u}}{t}
    +\frac{P Q \sqrt{\alpha }
    e^{2 \nu }}{t^3},\\
    \partial_t H&=
    -\frac{P \sqrt{\alpha } e^{2 \nu }}{t^3}, \label{eq:vacevolH}
  \end{align}
  and
  \begin{align}
    \label{eq:vacconstraint}
    2 t \del{\theta} u \del{t} u-\frac{\del{\theta}\alpha}{2\alpha }-\del{\theta}\nu&=0,\\
    \label{eq:vacevolLast}
    P {J_X} &=0.
  \end{align}
These equations are naturally viewed as evolution and constraint equations determined by \eqref{eq:vacevolFirst}-\eqref{eq:vacevolH} and \eqref{eq:vacconstraint}-\eqref{eq:vacevolLast}, respectively. 
We note that the constraint equation \eqref{eq:vacevolLast} is equivalent to \eqref{eq:twistconstgaugeres}.

\section{Nonlinear stability of $\Lambda$-Kasner solutions}
\label{s.main_result}

The main result of our article --- see Theorem~\ref{thm.main_theorem} for a precise statement --- is the nonlinear stability in the contracting time direction of polarised $\Tbb^2$-symmetric perturbations of the family of \emph{$\Lambda$-Kasner solutions} to the vacuum Einstein equations \eqref{eq:EFE} with non-vanishing cosmological constant $\Lambda$.
The family of $\Lambda$-Kasner solutions \cite{Garfinkle2021} is discussed in detail in Appendix~\ref{sec:Kasnersol}. 
Briefly, it is comprised of spatially homogeneous solutions of \eqref{eq:EFE}, which in Gaussian coordinates $(\tb,\xb_1,\xb_2,\xb_3)$ take the form
\begin{equation}
  \label{eq:KasnermetricGauss}
  g^{(K)}=-\rmd \tb^2+\LKone^{2/3}\bigl(z(\tb)\bigr)\sum_{i=1}^3 \tb^{2p_i}\LKtwo^{2p_i}\bigl(z(\tb)\bigr)d\xb_i^2,
\end{equation}
where the \emph{Kasner exponents} $p_1,p_2,p_3\in\Rbb$ satisfy the \emph{Kasner relations}
\begin{equation}
  \label{eq:Kasner}
  p_1 + p_2 + p_3 = p_1^2 + p_2^2 + p_3^2 = 1.
\end{equation}
The complex analytic functions $\LKone(z)$ and $\LKtwo(z)$ are specified in equation \eqref{eq:K1K2}, and it follows from their definitions that $\LKone(0)=\LKtwo(0)=1$, and that $\LKone(z)$ and $\LKtwo(z)$ are real-valued for all $z\in\Rbb$ and all $z\in i\Rbb$, where, for arbitrary $\Lambda\in\Rbb$, $z(\tb)$ is defined by
\begin{equation}
  \label{eq:defz}
  z(\tb)=\frac{\sqrt{3}}2 \sqrt{\Lambda}\, \tb
\end{equation}
with $\sqrt{\cdot}$ an arbitrary choice of branch of the complex square root function.

As we show in Appendix~\ref{sec:Kasnersol}, we can introduce areal coordinates in which the $\Lambda$-Kasner metrics \eqref{eq:KasnermetricGauss} take the form \eqref{T2metric} with
\begin{align}
  \label{eq:KasnersolFirst}
  u^{(K)} &= \frac{1-K}2 \ln(t) +\Ord\left(\Lambda\, t^{(K^2+3)/2}\right),\\
  {\nu^{(K)}}&=\frac{(1-K)^2}4 \ln(t) +\Ord\left(\Lambda\, t^{(K^2+3)/2}\right),\\
  \label{eq:Kasnersolalpha}
  \alpha^{(K)}&=1+\Ord\left(\Lambda\, t^{(K^2+3)/2}\right),\\
  \label{eq:KasnersolLast}
  Q^{(K)}&=G^{(K)}=H^{(K)}=0,
\end{align}
see especially \eqref{eq:Kasnerexp1} -- \eqref{eq:Kasnerexp3},
where the areal time $t$ is related to the Kasner time $\tb$ by \eqref{eq:KasnerCoordTrafo} and
$K\in\Rbb$ is an arbitrary parameter that determines the Kasner exponents according to
\begin{equation}
\label{eq:Kasnerexpo}
 p_1 =({K}^2-1)/({K}^2+3), \quad
 p_2 =2(1+{K})/({K}^2+3),\quad
 p_3 =2(1-{K})/({K}^2+3).
\end{equation}
Observe that \eqref{eq:Kasnerexpo} implies \eqref{eq:Kasner} for all choices of $K\in\Rbb$. 
It is also straightforward to see from \eqref{eq:KasnersolFirst} -- \eqref{eq:KasnersolLast} that the asymptotics of the $\Lambda$-Kasner solutions as  $t\searrow 0$ are identical to those of the standard vacuum \emph{Kasner solutions} \cite{kasner1921,abio2020}  given by setting $\Lambda=0$.
The parameters $\Lambda\in \Rbb$ and $K\in \Rbb$, through the relations \eqref{eq:defz} and \eqref{eq:Kasnerexpo}, define the 2-parameter family of spatially homogeneous $\Lambda$-Kasner solutions.

For a given $\Lambda$-Kasner solution $g^{(K)}$ with $K,\Lambda \in\Rbb$ expressed in areal coordinates as in \eqref{eq:KasnersolFirst}-\eqref{eq:KasnersolLast} above, the coordinate vector fields $\{\partial_\theta,\partial_x,\partial_y\}$ carry geometric information because, other than in exceptional cases, they each span one of the three $1$-dimensional eigenspaces of the  Weingarten map  $\Kbb^{(K)}$ induced by $g^{(K)}$ on the $t=const$-surfaces%
\footnote{
  Recall from the discussion in Appendix~\ref{sec:Kasnersol} that the two foliations of a $\Lambda$-Kasner spacetime induced by the two functions $t$ and $\tb$ are identical and so are the respective Weingarten maps. 
  Moreover, the vector fields $\partial_\theta$, $\partial_x$, $\partial_y$ are parallel to $\partial_{\xb_1}$, $\partial_{\xb_3}$, $\partial_{\xb_2}$, respectively.
} 
The exceptional cases are if $K=0$, $K=\pm 1$, or $K=\pm 3$ in which case some of these eigenspaces have multiplicity two.  
In particular, the vector fields $X=\partial_x$ and $Y=\partial_y$  carry distinct geometric information regarding eigenspaces of the Weingarten map.

The  class of polarised  $\Tbb^2$-symmetric perturbations of the  $\Lambda$-Kasner solutions $g^{(K)}$ for which we establish stability in the contracting time direction can be informally defined as the set of solutions $g$ of the vacuum Einstein equations \eqref{eq:EFE} that are sufficiently close to $g^{(K)}$ in a given $Q_\text{const}$-areal gauge. Moreover, we allow for arbitrary $Q, J_X,P\in\Rbb$ with $J_X P=0$, noting that the \emph{Killing vector fields $\partial_x$ and $\partial_y$ of $g$ agree with the geometrically distinguished eigenvector fields $X$ and $Y$ of the Weingarten map of the $\Lambda$-Kasner solution $g^{(K)}$}.
As a consequence of the fact that we work here with a number of spacetime geometries on the manifold M – those corresponding to the background solutions $g^{(K)}$ as well as those corresponding to the perturbed solutions $g$ – for this perturbation analysis, the quantity $Q$, which in \cite{abio2020} has been described as ``nothing but gauge", now becomes the carrier of geometric information.
As we explain below, of particular importance is that the unique hypersurface orthogonal Killing vector field of $g$ is parametrised by $Q$ according to \eqref{eq:hypersurfaceorthogonalKVF.1} and \eqref{eq:hypersurfaceorthogonalKVF.2}.
Choosing the   $Q_\text{const}$-areal gauge as opposed to the $Q_\text{0}$-areal gauge for the study of perturbations of $\Lambda$-Kasner solutions is therefore not just a matter of convenience; rather it leads to a new class of perturbations, the consequences of which are discussed below. 

\begin{thm}
  \label{thm.main_theorem}  
  Suppose $k\in\Zbb_{\ge 3}$, and the constants $Q,\Lambda, J_X, P, K \in \Rbb$
  satisfy $P J_X=0$ and one of the following: 
  \begin{align}
    \label{eq.JXnonzero.1}
    &\text{Case $1$ $\quad J_X \ne 0$, $P=0$, and $3 < K$},\\
    \label{eq.JXzero_Lambdanonzero.1}
    &\text{Case $2$ $\quad J_X=0$, $\Lambda \ne 0$,  
    and $K\in (-3,-1)\cup (3,\infty)$},\\
    \label{eq.JXzero_Lambdazero.1}
    &\text{Case $3$ $\quad J_X=\Lambda = 0$, and $K\in (-\infty,-1)\cup (3,\infty)$.}
  \end{align}  
Additionally, corresponding to the above cases, fix constants
$\ell$ and $\kappa_0$ according to 
\begin{align}
  \label{eq:kappa0final}
  \begin{split}
    &\text{Case $1$} \quad  \ell =\frac 12, \quad \kappa_0 = \min\{1,(K-3)(K+1)/4\}, \\
    &\text{Case $2$} \quad  \ell = 2,       \quad \kappa_0 = \min\{1,(K-3)(K+1)/4,(3+K)/2\}, \\
    &\text{Case $3$} \quad  \ell =\infty,   \quad \kappa_0 = \min\{1,(K-3)(K+1)/4\}.
  \end{split}
\end{align}
Then for every sufficiently small $T_0>0$,  there exist constants $R_0>0$ and $\sigma\in (0,2\kappa_0/3)$, such that, for each choice of 
Cauchy data $(\mathring u, \mathring v, \mathring\nu, \mathring\alpha,\mathring G,\mathring H)\in H^k$
satisfying the constraints \eqref{eq:vacconstraint} and \eqref{eq:vacevolLast} at $t=T_0$ and
\begin{equation}
    \label{eq:PerturbedKasnerCauchyData}
    \Bnorm{\Bigl(T_0^\ell e^{-\mathring{u}}, T_0\mathring v-(1-K)/2, T_0\del{\theta} \mathring{u},\mathring{\alpha}-1,T_0\del{\theta}{\mathring\alpha},T_0^{-1}e^{\mathring{\nu}}\Bigr)}_{H^k}<R_0,
  \end{equation}
  there exist functions
  \begin{equation}
    \label{eq:thmreg1}
  u,\nu,\alpha,G,H \in C^0((0,T_0],H^k(\Tbb))\cap C^1((0,T_0],H^{k-1}(\Tbb))
  \end{equation}
  that define a unique classical solution on $(0,T_0]\times \Tbb$ of the polarised $\Tbb^2$-symmetric vacuum Einstein equations \eqref{eq:vacevolFirst}-\eqref{eq:vacevolLast} and the initial conditions
  \begin{equation}
    (u,\partial_t u,\nu,\alpha,G,H)|_{t=T_0}=(\mathring u,\mathring v,\mathring \nu,\mathring \alpha,\mathring G,\mathring H).
  \end{equation}
  Moreover, 
  \begin{enumerate}[(a)]
  \item $u$ satisfies the additional regularity condition
    \begin{equation}
      \label{eq:thmreg2}
      \del{t}u, \del{\theta}u \in C^0((0,T_0],H^k(\Tbb))\cap C^1((0,T_0],H^{k-1}(\Tbb)),
  \end{equation}
  \item there exist functions $\Kt,\ut,\nut,\alphat \in H^{k-1}(\Tbb)$ satisfying
  \begin{equation}
    \label{eq:smallnessKalpha}
    \norm{K-\Kt}_{H^{k-1}}+\norm{1-\alphat}_{H^{k-1}}\lesssim R_0
  \end{equation}
  and
  \begin{equation}
    \label{eq.AsymptoticConstraintFinal}
  \partial_\theta \nut
  + (1-\Kt) \partial_\theta \ut
  + \frac 12 \alphat^{-1} \partial_\theta\alphat=0        
\end{equation}
  such that
  \begin{gather}
    \label{eq:ualphaest}
    \norm{\alpha-\alphat}_{H^{k-1}}
      + \norm{t\partial_t u-\frac{(1-\Kt)}{2}}_{H^{k-1}}
      \lesssim t + t^{2\kappa_0-2\sigma},\\
    \label{eq:unuest}
    \norm{u-\frac{(1-\Kt)}{2}\ln(t)-\ut}_{H^{k-1}}
      +\norm{\nu-\frac{(1-\Kt)^2}{4}\ln(t)-\nut}_{H^{k-1}}
      \lesssim  t+t^{2\kappa_0-2\sigma},\\
    \label{eq:GLimit.Final}
    \Bnorm{G(t)-\Gt}_{H^{k-1}}\lesssim |QP| t^{\min_{\theta\in(0,2\pi]}\{(1-\Kt)^2/2\}-2-\sigma} +| J_X| t^{\min_{\theta\in(0,2\pi]}\{(1+\Kt)^2\}/2-2-\sigma},\\
    \label{eq:HLimit.Final}
    \Bnorm{H(t)-\Ht}_{H^{k-1}}\lesssim |P| t^{\min_{\theta\in(0,2\pi]}\{(1-\Kt)^2\}/2-2-\sigma},
  \end{gather}
  for all $t\in (0,T_0]$, and 
  \item the Kretschmann scalar $\Ic$ diverges according to
\begin{equation}
  \label{eq.KretschmannLeadingOrder}
  \lim_{t\searrow 0}\Bnorm{\Ic\, t^{\Kt^2+3} - \frac{(\Kt^2 + 3)(\Kt^2-1)^2}{4\alphat^2} e^{-4(\ut + \nut)}}_{H^{k-3}}=0.
\end{equation}
\end{enumerate}
\end{thm}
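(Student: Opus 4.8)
\textit{Proof strategy.}
The plan is to recast the evolution system \eqref{eq:vacevolFirst}--\eqref{eq:vacevolH} as a first-order symmetric hyperbolic Fuchsian system in the singular time $t\searrow 0$, and then to invoke the global existence and limit theory for such systems developed in \cite{Oliynyk:CMP_2016} and employed in \cite{abio2020}. Since the only genuinely wave-like equation is the one for $u$, while $\nu,\alpha,G,H$ satisfy transport/ODE equations along $\del{t}$, the core object is a first-order reduction of the $u$-equation. Introducing the rescaled unknowns suggested by the Cauchy data norm \eqref{eq:PerturbedKasnerCauchyData}, namely (schematically) $v=t\del{t}u$, $p=t\del{\theta}u$, together with $w_1=t^\ell e^{-u}$, $w_6=t^{-1}e^{\nu}$, and $\alpha-1$, $t\del{\theta}\alpha$, I would assemble a vector $U$ whose components are each expected to remain small on $(0,T_0]$. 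A short computation turns \eqref{eq:vacevolFirst} into the pair $\del{t}v=\alpha\del{\theta}p+t\,F_u$ and $\del{t}p=\del{\theta}v+\tfrac1t p$, which after the substitution $\tau=-\ln t$ is manifestly Fuchsian: the $+\tfrac1t p$ term becomes a decay term $-p$ in $\tau$, forcing $p\to 0$, while $v$ converges to a spatially dependent limit encoding the asymptotic Kasner exponent.

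The key preparatory step is to verify the structural hypotheses of the Fuchsian theorem. The symmetrizer $B^0$ is built from $\alpha$ (positive and close to $1$), and the singular coefficient $\tfrac1t\mathcal B$ must have the correct spectral splitting: strictly decaying on the $(p,w_1,w_6)$ and $\del{\theta}\alpha$ directions, and zero on the convergent directions $v$ and $\alpha-1$. This is exactly where the admissible ranges of $K$ and the values of $\ell$ and $\kappa_0$ in \eqref{eq:kappa0final} originate: using $u^{(K)}\sim\tfrac{1-K}2\ln t$ and $\nu^{(K)}\sim\tfrac{(1-K)^2}4\ln t$, one finds $w_6\sim t^{(K-3)(K+1)/4}$ and $w_1\sim t^{\ell+(K-1)/2}$, so the requirements that these exponents be positive, and that every nonlinear forcing term --- the $J_X^2 e^{2\nu-4u}/t^2$, $P^2 e^{2\nu}/t^3$, and $\Lambda e^{2\nu-2u}$ contributions --- decay as $t\searrow 0$, single out $K>3$ in the twisted case, $K\in(-3,-1)\cup(3,\infty)$ when a cosmological constant is switched on, and $K\in(-\infty,-1)\cup(3,\infty)$ in the pure Gowdy-type case. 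The constant $\kappa_0$ is precisely the slowest such decay rate, and the loss $\sigma$ accounts for the $\ln t$ factors produced by differentiating the $\theta$-dependent exponent.

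With the structural conditions in hand, I would apply the Fuchsian global existence and uniqueness theorem to produce the solution $U$ on all of $(0,T_0]$ with uniform $H^k$ bounds of size $\lesssim R_0$, which yields the regularity \eqref{eq:thmreg1}--\eqref{eq:thmreg2} and, reading off the convergent components, the asymptotic data $\tilde K,\tilde u,\tilde\nu,\tilde\alpha\in H^{k-1}$ together with \eqref{eq:smallnessKalpha}, \eqref{eq:ualphaest}, and \eqref{eq:unuest}. The fields $G$ and $H$ are then recovered by integrating \eqref{eq:vacevolH} and the $G$-equation directly; the admissible $K$-ranges make the right-hand sides integrable at $t=0$, and a single integration produces the limits $\tilde G,\tilde H$ and the rates \eqref{eq:GLimit.Final}--\eqref{eq:HLimit.Final}. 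The momentum constraint \eqref{eq:vacconstraint} is propagated in the standard way --- it satisfies a homogeneous transport equation along the evolution, so vanishing at $t=T_0$ forces vanishing throughout --- and passing to the limit $t\searrow 0$ in \eqref{eq:vacconstraint}, with the logarithmically divergent terms cancelling, yields the asymptotic constraint \eqref{eq.AsymptoticConstraintFinal}. Finally, part (c) follows by inserting the expansions of part (b) into the coordinate expression for the Kretschmann scalar $\Ic$ of the metric \eqref{T2metric}: the dominant contribution scales as $t^{-(\tilde K^2+3)}$ with coefficient $\tfrac{(\tilde K^2+3)(\tilde K^2-1)^2}{4\tilde\alpha^2}e^{-4(\tilde u+\tilde\nu)}$, giving \eqref{eq.KretschmannLeadingOrder} and hence $C^2$-inextendibility.

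I expect the main obstacle to be the structural verification of the second step: arranging a single symmetric hyperbolic Fuchsian formulation, with one fixed symmetrizer and projection, in which the convergent and decaying directions are correctly separated and in which every twist, cosmological-constant, and exponential nonlinearity is shown to decay at a rate compatible with $\kappa_0$, uniformly across the three parameter regimes. The $\Lambda$-terms are the delicate new ingredient relative to \cite{abio2020}: they must be controlled through the $e^{-u}$ variable with the larger weight $\ell=2$ in Case~$2$, and it is their marginal decay that forces the more restrictive $K$-range there and underlies the destabilization phenomenon noted in the introduction.
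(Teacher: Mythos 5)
Your proposal takes essentially the same route as the paper: the same rescaled unknowns (the paper's $w_0=t^\ell e^{-u}$, $w_1=t\del{t}u-a$, $w_2=t\del{\theta}u$, $\psi=\alpha-1$, $\eta=t\del{\theta}\alpha$, $\xi=t^{-1}e^{\nu}$ with $a=(1-K)/2$), the same symmetric hyperbolic Fuchsian reduction treated with the global existence theorem from the appendix of \cite{abio2020}, the same identification of $\kappa_0$ from the exponents $(K-3)(K+1)/4$ and $\ell+(K-1)/2$, direct time-integration for $G$ and $H$, propagation of the constraint and passage to the limit for the asymptotic constraint, and a direct computation for the Kretschmann scalar. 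The strategy is correct and matches the paper's proof in all essentials.
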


\noindent 
The proof of this theorem is given in Section~\ref{s.main_proof}. Before proceeding with the proof, we first make a number of observations and remarks.

By assumption, the metric coefficients of the $\Lambda$-Kasner solution $g^{(K)}$  determined by  $\Lambda$ and $K$  correspond to the areal gauge \eqref{T2metric}, with the asymptotic behaviour of the metric coefficients given by \eqref{eq:KasnersolFirst}-\eqref{eq:KasnersolLast} (see Section~\ref{sec:Kasnersol} for details) as $t\searrow 0$. 
The solution $g$ to the vacuum Einstein equations \eqref{eq:vacevolFirst}-\eqref{eq:vacevolLast} corresponding to the functions $(u,\nu,\alpha,G,H)$ discussed in this theorem can then be interpreted as a polarised $\Tbb^2$-symmetric perturbation of $g^{(K)}$ written in terms of a fixed $Q_{\text{const}}$-areal gauge. 
The precise meaning of ``perturbation'' here is determined by the initial data smallness condition  \eqref{eq:PerturbedKasnerCauchyData}.
Consistent with \eqref{eq:KasnersolFirst}-\eqref{eq:Kasnersolalpha} as well as the restrictions for $K$ and $\ell$ given by \eqref{eq.JXnonzero.1}-\eqref{eq:kappa0final}, this initial data condition implies that $t^\ell e^{-u}$, $t\partial_tu-(1-K)/2$, $t\partial_\theta u$, $\alpha-1$, $t\partial_\theta\alpha$ and $t^{-1} e^\nu$
are required to be small at the initial time $t=T_0$, where we notice especially that $t\partial_tu-(1-K)/2$ is approximately $t\partial_tu-t\partial_tu^{(K)}$ and $\alpha-1$ is approximately $\alpha-\alpha^{(K)}$ at $t=T_0$ when $T_0$ is small. The reason for the particular form \eqref{eq:PerturbedKasnerCauchyData} of this smallness condition becomes evident in the proof of Theorem~\ref{thm.main_theorem} discussed in Section~\ref{s.main_proof}.
It is important to note, in any case, that the initial values of $G$ and $H$ as well as the values of the parameters $Q$, $J_X$ and $J_Y$ are \emph{not} required to be small.

Theorem~\ref{thm.main_theorem} states that any such perturbation $g$ extends with the regularity given by \eqref{eq:thmreg1} and \eqref{eq:thmreg2} down to $t=0$ in areal coordinates (i.e., it does not become singular before the time $t=0$ is reached). 
Due to  \eqref{eq.KretschmannLeadingOrder}, we know that these perturbation solutions terminate at a big bang curvature singularity at $t=0$, and as a consequence, they are $C^2$-inextendible past the singularity. 
Precise dynamical information about the limit at $t=0$ is provided by the  estimates \eqref{eq:ualphaest}-\eqref{eq:HLimit.Final} for the metric component functions of the perturbation. 
Moreover, if $Q$, $J_X$, $J_Y$, $\mathring G$ and $\mathring H$ are small, which would imply that the perturbation being consider is genuinely a small perturbation of $g^{(K)}$, we can interpret these estimates as the statement that perturbations converge to a new ``$\Lambda$-Kasner metric'' $g^{(\Kt)}$ at $t=0$ with a new ``Kasner parameter'' $\Kt$ at $t=0$. 
However, since $\Kt$ can be spatially dependent, the limit metric $g^{(\Kt)}$ is, in general, \emph{not} a solution of the vacuum Einstein equations.

It is of particular interest to note that Theorem~\ref{thm.main_theorem} implies the nonlinear stability in the contracting direction of the subfamily of $\Lambda$-Kasner solutions determined by the choice of parameters according to one of the three cases \eqref{eq.JXnonzero.1}-\eqref{eq.JXzero_Lambdazero.1}.

\subsection*{Case 3} 
In this case, $J_X=\Lambda=0$ and $J_Y\in\Rbb$, and  according to \eqref{eq:TwistConstants.2}-\eqref{eq:hypersurfaceorthogonalKVF.2}, the unique hypersurface orthogonal Killing vector field of $g$ agrees with the eigenvector field $X$ of the Weingarten map associated with the foliation of $t=const$-surfaces of $g^{(K)}$. 
For $Q=0$, Theorem~\ref{thm.main_theorem} then reduces to Theorem~4.1 in \cite{abio2020} with $m=J_Y^2$. In \cite{abio2020}, we provide examples of perturbations of Kasner solutions whose Kasner parameters $K$ violate \eqref{eq.JXzero_Lambdazero.1} and exhibit unstable dynamics near $t=0$. 
Based on this, we conjecture that the restriction \eqref{eq.JXzero_Lambdazero.1} for $K$ is sharp. 
The Case~3 of Theorem~\ref{thm.main_theorem} with $Q\not=0$ yields a small extension of the results in \cite{abio2020} since for such solutions  the two Killing vector fields $X$ and $Y$ of $g$ are not orthogonal. 
Similar to \cite{abio2020}, however, we expect the allowed range of $K$ \eqref{eq.JXzero_Lambdazero.1}, which is notably unaffected by the presence of $Q$, to be sharp since it should be possible to generalize the examples from \cite{abio2020} that exhibit unstable dynamics to this new setting.

\subsection*{Case~2} This case is similar to Case~3, but now allows $\Lambda$ to be an arbitrary non-zero constant. This yields a new stability result that is not covered by the results of \cite{abio2020}, irrespective of the value of $Q$. It is worth noting that the presence of the cosmological constant reduces the range \eqref{eq.JXzero_Lambdanonzero.1} of  $K$-values for which stability is guaranteed. In the same way as for Case~3, we expect this restriction for $K$ to be sharp and that Kasner solutions outside this range are unstable near $t=0$.

\subsection*{Case~1} 
This case, where $J_X,Q,\Lambda\in\Rbb$ are unrestricted and $J_Y=Q J_X$ must hold in general as a consequence of \eqref{eq:vacevolLast}, has led to the most significant new stability results that we can establish in this article. 
The significance of this case is due to the fact that the unique hypersurface orthogonal Killing vector field of $g$ is $Y-Q X$ --- see \eqref{eq:TwistConstants.1}-\eqref{eq:hypersurfaceorthogonalKVF.1} --- can be an arbitrary linear combination of the two eigenvector fields $X$ and $Y$. 
As we discuss above this means that the class of perturbations in Case~1 is not included in the otherwise more general classes of perturbations discussed in \cite{fournodavlos2020b, fournodavlos2021}. 
It is worth noting that due to \eqref{eq.JXnonzero.1} the range of $K$-values for which our stability result holds in this case is reduced even further in comparison to Cases~2~and~3.  
As for Cases~2~and~3, we expect this restriction for $K$ to be sharp and that Kasner solutions outside this range are unstable near $t=0$.

\bigskip

As mentioned above we consider Case~1 \eqref{eq.JXnonzero.1} as the main new contribution of this paper. 
This is because if $Q$ is sufficiently large this class is not contained in the family of ($U(1)$-symmetric) perturbations of Kasner solutions studied in \cite{fournodavlos2020b}, even if $\Lambda=0$; this is in contrast with our stability results in \cite{abio2020}, which are covered by the work in \cite{fournodavlos2020b}.
The reason that the stability results established in this paper are not covered by those from \cite{fournodavlos2020b} is because in that article, the perturbations are required to have a hypersurface orthogonal Killing vector field that is close to one of the distinct eigenvector fields, say $X$, of $g^{(K)}$. 
As discussed above, the hypersurface orthogonal vector field of our perturbations $g$ in Case~1 is in general an arbitrary linear combination of the distinct eigenvector fields $X$ and $Y$  of $g^{(K)}$. 
The fact that this linear combination depends on the value of $Q$ according to \eqref{eq:hypersurfaceorthogonalKVF.1} is our main motivation to employ the more general $Q_\text{const}$-areal gauge as opposed to the $Q_0$-areal gauge which we use in \cite{abio2020}.

We also note that, similar to the standard Kasner solutions in \cite{abio2020}, the $\Lambda$-Kasner solutions discussed here have six natural isometries that preserve the form of the metric and map between different values of $K$. 
Each of these six isometries is defined as a map that swaps a pair of spatial coordinates. 
While each of these isometries can be applied to transform $g^{(K)}$ into an equivalent form with a different $K$-value preserving the areal gauge, none of these isometries simultaneously preserve the $Q_{\text{const}}$-areal gauge of a generic metric perturbation $g$. 
In particular, the isometry that interchanges the $x$ and $y$ coordinates --- i.e., the isometry corresponding to \eqref{eq:coordtrafo}-\eqref{eq:coordtrafo2} with $a_{11}=a_{22}=0$ and $a_{12}=a_{21}=1$ --- maps according to \eqref{eq:Qtransform} the constant value of $Q$ to a generally non-constant function form for $Q$.
Similar to the situation in \cite{abio2020}, we conclude from this that none of these isometries can be exploited to enlarge the range of allowed $K$-values for which Theorem~\ref{thm.main_theorem} applies. 
It follows that the restrictions \eqref{eq.JXnonzero.1}-\eqref{eq.JXzero_Lambdazero.1} for $K$ are geometric.

Finally, we remark that our theorem also extends the results on the extent of areal time in the strictly non-polarised case with $\Lambda>0$ in \cite{Smulevici:2011} to the \emph{polarised} $\Tbb^2$-symmetric setting with arbitrary $\Lambda\in\Rbb$, at least for cosmological solutions close to our subfamily of $\Lambda$-Kasner solutions (c.f. the discussion in Section~\ref{s.T2spacetimes.areal_gauge}).

\section{Proof of Theorem~\ref{thm.main_theorem}}
\label{s.main_proof}

\subsection{The polarised $\Tbb^2$-symmetric 
  Einstein equations as a first order symmetric hyperbolic Fuchsian system}
The first step of the proof of Theorem~\ref{thm.main_theorem} is to express the evolution system as a first order symmetric hyperbolic Fuchsian system. To this end, we set
\begin{equation}
\label{first_order_variables}
   (z_0, z_1, z_2) = (u, \del{t}u, \del{\theta}u), 
   \quad 
   \zeta = \del{\theta}\alpha,
 \end{equation}
 and then define new variables 
\begin{equation}
  \label{Udef}
  U = (w_0, w_1,w_2,\psi,\eta,\xi)^{\tr}
\end{equation}
by
 \begin{align}
   \label{eq:w0}
   z_0&=\ell\ln(t)-\ln(w_0),\\
   z_1&=\frac 1t (a+w_1),\\
   z_2&=\frac 1t w_2,\\
   \alpha&=1+\psi,\\
   \zeta&=\frac 1t\eta, \\
   \label{eq:defxi}
   \nu&=\ln(t)+\ln(\xi),
 \end{align}
 for some, so far, arbitrary real numbers $a$ and $\ell$. 
For our arguments below, we find it useful to define
\begin{align}
  \label{eq.bdef1}
  \bfr_P &:= P^2 (1 + \psi) \xi^2 , 
  \quad 
  \bfr_{J_X} := J_X^2 t^{2(1-2\ell)} w_0^4(1+\psi)\xi^2, \\ 
  \label{eq.bdef2}
  \bfr_\Lambda &:= 2 \Lambda t^{2(2-\ell)}w_0^2(1+\psi)\xi^2,
  \quad
  \bfr := \bfr_P + \bfr_{J_X} + 2 \bfr_\Lambda.
\end{align}

Using the above definitions, it is not difficult to verify via a straightforward calculation that the polarised $\Tbb^2$-symmetric vacuum Einstein equations \eqref{eq:vacevolFirst}-\eqref{eq:vacevolLast} are equivalent to the following evolution equations
\begin{align}
  \label{eq:mainevol1}
  B^0\del{t}U + B^1\del{\theta}U
    &=\frac{1}{t}\Bc\Pbb U + \frac{1}{t}F, \\
  \label{eq:mainevol2}
  \partial_t G
    &=\frac{P Q}{t}\sqrt{1+\psi}\,\xi^2
      -\frac{J_X}{t}\sqrt{1+\psi}\,t^{1-2\ell}\xi^2 w_0^4, \\
  \label{eq:mainevol3}
  \partial_t H
    &= -\frac{P}{t}\sqrt{1+\psi}\,\xi^2,
\end{align}
and constraints
 \begin{align}
   \label{eq:finalconstraint1}
   t{\del{\theta}\xi} &=2 w_2 (a+w_1) {\xi}-\frac{\eta {\xi}}{2(1+\psi) },\\
   \label{eq:finalconstraint2}
   {\del{\theta} w_0}&=-\frac 1t w_2 w_0,\\
   \label{eq:finalconstraint3}
   \del{\theta}\psi&=\frac 1t \eta,\\
   \label{eq:finalconstraint4}
   P {J_X} &=0,
 \end{align}
 where 
\begin{flalign}
  \Pbb&=
    \begin{pmatrix} 
     1 & 0 & 0 & 0 & 0 & 0\\
     0 & 0 & 0 & 0 & 0 & 0\\
     0 & 0 & 1 & 0 & 0 & 0\\
     0 & 0 & 0 & 0 & 0 & 0\\
     0 & 0 & 0 & 0 & 1 & 0\\
     0 & 0 & 0 & 0 & 0 & 1
    \end{pmatrix}, & \label{Pbbdef}\\
  B^0&=
    \begin{pmatrix} 
     1 & 0 & 0 & 0 & 0 & 0\\
     0 & 1 & 0 & 0 & 0 & 0\\
     0 & 0 & 1+\psi & 0 & 0 & 0\\
     0 & 0 & 0 & 1 & 0 & 0\\
     0 & 0 & 0 & 0 & 1 & 0\\
     0 & 0 & 0 & 0 & 0 & 4
    \end{pmatrix}, &\label{B0def}\\
  B^1&=
    \begin{pmatrix} 
     0 & 0 & 0 & 0 & 0 & 0\\
     0 & 0 & -(1+\psi) & 0 & 0 & 0\\
     0 & -(1+\psi) & 0 & 0 & 0 & 0\\
     0 & 0 & 0 & 0 & 0 & 0\\
     0 & 0 & 0 & 0 & 0 & 0\\
     0 & 0 & 0 & 0 & 0 & 0
    \end{pmatrix},&\label{B1def}\\
  \Bc&=
    \begin{pmatrix} 
     \ell - a - w_1  & 0 & 0 & 0 & 0 & 0\\
     0 & 1 & 0 & 0 & 0 & 0\\
     0 & 0 & 1+\psi & 0 & 0 & 0\\
     0 & 0 & 0 & 1 & 0 & 0\\
     0 & 0 & 4(1+\psi)(\bfr_{J_X} + \bfr_\Lambda - (a+w_1)\bfr) & 0 & 1 - \bfr & 0 \\
     0 & 0 & 0 & 0 & 0 &
     4 \bigl((a+w_1)^2-1+w_2^2(1+\psi)\bigr) + \bfr
    \end{pmatrix}, &\label{Bcdef}\\
  \intertext{and}
  F &= 
    \begin{pmatrix}
      0 \\
      \frac{1}{2}(w_2\eta + \bfr_{J_X} + \bfr_\Lambda-(a+w_1)\bfr)\\
      0\\
      -(1+\psi)\bfr\\
      0\\
      0
    \end{pmatrix}. &\label{Fdef}
\end{flalign}
The significance of this formulation of the polarised $\Tbb^2$-symmetric vacuum Einstein equations is that the main evolutionary part of the system --- that defined by \eqref{eq:mainevol1} and \eqref{Pbbdef}-\eqref{Fdef} --- is now in first order symmetric hyperbolic Fuchsian form; for details, see \cite{BOOS:2020} and the appendix of \cite{abio2020}. 
Theorem A.1 in \cite{abio2020} plays a central role in the proof below, in particular, in the proof of Proposition~\ref{prop:main_existence} below.

\subsection{Global existence of solutions of the initial value problem of the main evolution equations}
In this section, we establish a global existence result for the initial value problem consisting of system \eqref{eq:mainevol1} with coefficients defined by \eqref{Pbbdef}-\eqref{Fdef} for initial data $\mathring{U}$, i.e.,
\begin{equation}
  \label{eq:GIVP_CD}
  U(T_0)=\mathring{U},
\end{equation}
where at this point $T_0>0$ can be chosen arbitrarily. 
Proposition~\ref{prop:main_existence} provides global existence and first estimates of the behaviour of the fields at $t=0$, which are then exploited in the remainder of the proof of Theorem \ref{thm.main_theorem}. 
The proposition follows, as we show below, from an application of Theorem A.1 from \cite{abio2020}.

\begin{prop} 
  \label{prop:main_existence}
  Suppose $T_0>0$, $k \in \Zbb_{\geq 2}$, $Q, P \in \Rbb$, 
  and the constants $J_X$, $\Lambda,\ell,a\in \Rbb$ satisfy one
  of the following:
  \begin{align}
    \label{eq.JXnonzero}
    &\text{Case $1$ $\quad J_X \ne 0$,  $a<\ell \le \frac 12$ and   $a \in (-\infty, -1)$},\\
    \label{eq.JXzero_Lambdanonzero}
    &\text{Case $2$ $\quad  J_X=0$, $\Lambda \ne 0$, 
     $a<\ell \le 2$ and $a \in (-\infty, -1) \cup (1, 2)$},\\
    \label{eq.JXzero_Lambdazero}
    &\text{Case $3$ $\quad J_X=\Lambda = 0$, 
    $a<\ell$ and $a \in (-\infty, -1)\cup(1, \infty)$}.
  \end{align}
  Additionally, suppose
  \begin{equation*}
    \mathring{U}=\bigl(\mathring{w}_0,\mathring{w}_1,\mathring{w}_2,\mathring{\psi},\mathring{\eta},\mathring{\xi}\bigr)^{\tr}\in H^k(\Tbb,\Rbb^6)
  \end{equation*}
  is chosen so that $\mathring{\xi}>0$, and
set
\begin{equation}
\label{eq:definekappa0}
\kappa_0=\min\{1, a^2-1, \ell - a\}.
\end{equation}
Then, for every sufficiently small $R>0$ and $\sigma\in (0,\kappa_0)$, there exists a constant $R_0>0$ such that, if $\mathring{U}$ satisfies
  \begin{equation}
    \label{eq:CDSmallness}
    \norm{\mathring{U}}_{H^k}< R_0,
  \end{equation}
  there exists a unique solution 
  \begin{equation*}
    U \in C^0\bigl((0,T_0],H^k(\Tbb,\Rbb^6)\bigr)\cap  L^\infty\bigl((0,T_0],H^k(\Tbb,\Rbb^6)\bigr) \cap C^1\bigl((0,T_0],H^{k-1}(\Tbb,\Rbb^6)\bigr)\subset C^1\bigl((0,T_0]\times\Tbb,\Rbb^6\bigr)
  \end{equation*}
  of the GIVP\footnote{GIVP stands for \textit{global initial value problem}, which refers to solutions of symmetric hyperbolic Fuchsian initial value problems where the initial data is posed at some $T_0>0$ and the solution exists for all $t\in (0,T_0]$.} \eqref{eq:mainevol1} and \eqref{eq:GIVP_CD} such that
  \begin{equation}
    \label{eq.Ubnd}
    \norm{U}_{L^\infty((0,T_0]\times \Tbb)} < R,
  \end{equation}
 and the limit $\lim_{t\searrow 0} \Pbb^\perp U(t)$, denoted 
  \begin{equation}
    \label{eq.PbbperpU(0)}
    \Pbb^\perp U(0) = (0, \wt_1, 0, \psit, 0,0), 
  \end{equation}
  exists in $H^{k-1}(\Tbb,\Rbb^6)$.
  Finally, for $0<t<T_0$,  the solution $U$ satisfies  the energy estimate
  \begin{equation}  
    \label{eq:energyestimates}
    \norm{U(t)}_{H^k}^2+ \int_t^{T_0} \frac{1}{\tau} \norm{\Pbb U(\tau)}_{H^k}^2\, d\tau   \lesssim \norm{\mathring{U}}_{H^k}^2,
  \end{equation}
  the decay estimates    
  \begin{align}
    \label{eq:decayestimatesGIVP}
    \norm{\Pbb U(t)}_{H^{k-1}} 
      \lesssim t^{\kappa_0-\sigma}
    \AND 
    \norm{\Pbb^\perp U(t) - \Pbb^\perp U(0)}_{H^{k-1}} 
    \lesssim t + t^{2\kappa_0-2\sigma},
  \end{align}
  and the bound
  \begin{equation}
    \label{eq:wt1est}
    \norm{\wt_1}_{H^{k-1}}+\norm{\psit}_{H^{k-1}}\lesssim \norm{\mathring{U}}_{H^k}.
  \end{equation}
\end{prop}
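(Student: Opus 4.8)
The plan is to observe that the evolution system \eqref{eq:mainevol1}, with coefficients \eqref{Pbbdef}--\eqref{Fdef}, is already written in the first order symmetric hyperbolic Fuchsian form governed by Theorem~A.1 of \cite{abio2020}, and then to verify the hypotheses of that theorem, reading off the decay exponent $\kappa_0$ from the algebraic structure of $\Bc$. I would begin with the elementary structural checks: $B^0$ from \eqref{B0def} is diagonal, hence symmetric, and $B^1$ from \eqref{B1def} is symmetric; the coefficients are smooth in $(U,t)$ with the regularity Theorem~A.1 demands. On the set $\norm{U}_{L^\infty}<R$ with $R$ small the entry $1+\psi$ lies in a fixed compact subinterval of $(0,\infty)$, so $B^0$ is uniformly positive definite with bounded inverse. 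The matrix $\Pbb$ in \eqref{Pbbdef} is a constant symmetric projection, and since $B^0$ and $\Pbb$ are both diagonal they commute, so $\Pbb$ is compatible with $B^0$ as required. Finally, $F$ in \eqref{Fdef} is supported only in its second and fourth components, i.e.\ $F=\Pbb^\perp F$, so the singular source $\tfrac1t F$ acts solely on the $\Pbb^\perp$-components.

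The core of the argument is the Fuchsian coercivity condition, that the symmetric part of $\Pbb\Bc\Pbb$ is bounded below by $\kappa_0\Pbb$ in the $B^0$-weighted inner product, uniformly down to $t=0$. I would read this off the diagonal of $\Bc$ in \eqref{Bcdef} on the range of $\Pbb$ (slots $1,3,5,6$): the $(1,1)$ entry $\ell-a-w_1$ is close to $\ell-a$; the $(3,3)$ entry $1+\psi$ is close to $1$; the $(5,5)$ entry $1-\bfr$ is close to $1$; and the $(6,6)$ entry $4\bigl((a+w_1)^2-1+w_2^2(1+\psi)\bigr)+\bfr$ is close to $4(a^2-1)$, which, after dividing by $B^0_{66}=4$, contributes $a^2-1$. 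Controlling the corrections by $\norm{U}_{L^\infty}<R$ yields the lower bound $\kappa_0=\min\{1,a^2-1,\ell-a\}$ of \eqref{eq:definekappa0}, which in turn forces $a<\ell$ (first slot) and $a^2>1$ (last slot), exactly as required in the hypotheses. The only off-diagonal entry of $\Bc$ internal to the $\Pbb$-block is the $(5,3)$ entry, which is proportional to the small quantity $\bfr_{J_X}+\bfr_\Lambda-(a+w_1)\bfr$ and is therefore absorbed into the error terms permitted by Theorem~A.1.

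The step I expect to be the main obstacle is showing that the $t$-weighted coefficients $\bfr_{J_X}$, $\bfr_\Lambda$ and $\bfr_P$ of \eqref{eq.bdef1}--\eqref{eq.bdef2}---which enter both $\Bc$ and $F$---remain uniformly bounded, and in fact decay, as $t\searrow0$, so that neither the coercivity above is destroyed nor the integrability of $\tfrac1t F$ is lost. This is precisely where the three cases and the constraints on $\ell$ and $a$ originate: $\bfr_{J_X}\propto t^{2(1-2\ell)}$ stays bounded iff $\ell\le\tfrac12$ (Case~1, $J_X\ne0$), $\bfr_\Lambda\propto t^{2(2-\ell)}$ stays bounded iff $\ell\le2$ (Case~2, $J_X=0$, $\Lambda\ne0$), while in Case~3 both coefficients vanish identically and only $\bfr_P=P^2(1+\psi)\xi^2$ survives, which carries no power of $t$ but is small because $\xi$ is a small component of $U$. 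The admissible ranges for $a$ then follow by intersecting $a<\ell$ and $a^2>1$ with the respective upper bound on $\ell$ (e.g.\ $a<\ell\le\tfrac12$ together with $a^2>1$ forces $a\in(-\infty,-1)$ in Case~1).

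With these bounds established, Theorem~A.1 applies and delivers the full conclusion: for data small enough in $H^k$, a unique global solution $U$ on $(0,T_0]$ with the stated regularity and $L^\infty$ bound \eqref{eq.Ubnd}, the energy estimate \eqref{eq:energyestimates}, the decay $\norm{\Pbb U(t)}_{H^{k-1}}\lesssim t^{\kappa_0-\sigma}$, and the existence of the limit $\Pbb^\perp U(0)$ in \eqref{eq.PbbperpU(0)} together with its rate in \eqref{eq:decayestimatesGIVP}. Here the decay of $\Pbb U$ feeds back to show $F\lesssim t^{2\kappa_0-2\sigma}$, making $\tfrac1t F$ integrable precisely when $\sigma<\kappa_0$. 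The remaining bound \eqref{eq:wt1est} on $\wt_1$ and $\psit$ then follows by letting $t\searrow0$ in \eqref{eq:energyestimates} and using \eqref{eq:decayestimatesGIVP} to control $\Pbb^\perp U(t)-\Pbb^\perp U(0)$.
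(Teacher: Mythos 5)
Your proposal follows essentially the same route as the paper: verify the coefficient assumptions of Theorem~A.1 of \cite{abio2020} for the system \eqref{eq:mainevol1} with \eqref{Pbbdef}--\eqref{Fdef} --- in particular the block coercivity of $\Bc$ relative to $B^0$ giving $\kappa_0=\min\{1,a^2-1,\ell-a\}$, the structure $\Pbb F=0$ with $\Pbb^\perp F$ quadratic in $\Pbb U$, and the boundedness of $\bfr_{J_X},\bfr_\Lambda$ as $t\searrow 0$ under the case-dependent restrictions on $\ell$ --- and then invoke that theorem to obtain existence, the energy and decay estimates, and the limit $\Pbb^\perp U(0)$. The only hypothesis you do not mention is the divergence condition $\Div\! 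B=\Ordc(\theta+\tfrac{\beta}{t}\Pbb U\otimes\Pbb U)$ required by the cited theorem, which the paper verifies by an explicit computation but which is routine here.
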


Observe that, in contrast to Theorem~\ref{s.main_result}, we do not require $T_0$ to be small here. 
The smallness requirement for $T_0$ is introduced only in Section~\ref{s.proof_main_result} below for the reason explained there.

\begin{rem} 
  In Section ~\ref{s.proof_main_result}, we interpret the solutions specified in Proposition~\ref{prop:main_existence} as perturbations of $\Lambda$-Kasner solutions where the constant $a$ determines the Kasner parameter $K$ and the constants $Q$, $P$ and $J_X$ characterise the class of perturbations. 
  Note also that Proposition~\ref{prop:main_existence} only establishes the existence of solutions to \eqref{eq:mainevol1}; the remaining equations \eqref{eq:mainevol2}-\eqref{eq:finalconstraint4} are addressed in Propositions~\ref{prop:hoexp} and~\ref{prop:hoexp} below. 
  It is also worth pointing out that the requirement that $\xi>0$ --- see \eqref{eq:defxi} --- is fulfilled by the initial data choice $\mathring\xi>0$ since this condition is preserved by the last equation of the system \eqref{eq:mainevol1}. 
  Finally, we observe that Case 3, where $\bfr_{J_{X}} = \bfr_\Lambda = 0$ and $\bfr = \bfr_P = J_Y^2 (1+\psi)\xi^2$ (cf.\ \eqref{eq.bdef1}-\eqref{eq.bdef2}), is discussed in \cite{abio2020}. 
  Since $\ell$ can be taken to be an arbitrarily large value greater than $a$ in this case, we find that $\kappa_0=\min\{1, a^2-1\}$ which agrees with the result from \cite{abio2020}.
\end{rem}
\begin{proof}[Proof of Proposition~\ref{prop:main_existence}] 
The proof of this proposition follows from an application of Theorem A.1 from Appendix~A.2 of \cite{abio2020}. In order to apply this theorem, we first need to verify that the symmetric hyperbolic Fuchsian system \eqref{eq:mainevol1} satisfies the coefficient assumptions (i)-(vi) from Appendix~A.1 of \cite{abio2020}. 

To start, we observe that properties (i) and (ii) are obviously satisfied; that is, $U$ is an $\Rbb^N$-valued map, and the projection operator $\Pbb$ --- see \eqref{Pbbdef} --- satisfies $\Pbb^2 = \Pbb$, $\Pbb^{\tr} = \Pbb$, and $\del{t}\Pbb =0$. 
Regarding property (iii), it is clear from \eqref{B0def} and \eqref{Bcdef} that both $B^0$ and $\Bc$ depend smoothly on the components of $U$ and only $\Bc$ contains explicit dependence on $t$ through terms involving $\bfr_{J_X}, \bfr_\Lambda$, and hence also $\bfr$ (cf.\ \eqref{eq.bdef1}-\eqref{eq.bdef2}). Continuity of $\Bc$ at $t=0$ follows in each of the three cases \eqref{eq.JXnonzero}-\eqref{eq.JXzero_Lambdazero} from the assumptions on $\ell$. 
Furthermore, since, in each of the three cases \eqref{eq.JXnonzero} -\eqref{eq.JXzero_Lambdazero} we have $\ell > a, a^2>1$, there exist constants $R, \gamma_1, \gamma_2, \kappa > 0$ such that 
  \begin{equation*}
    0 < \kappa < \kappa_0=\min\{1, a^2-1, \ell - a\} 
    \AND 
   \gamma_1 > \frac{1}{1-R}      
  \end{equation*} 
  and the inequalities
  \[B^0 \le \frac 1\kappa \Bc \le \gamma_2 \id 
  \AND
  \frac{1}{\gamma_1} \id \le B^0
  \]
  hold for all $U \in \Rbb^6$ with $|U| < R$. 
  The remaining structural conditions in property (iii) are easily verified. 

  Next, we observe that property (iv) is satisfied by the source term $F$ since it is clear from \eqref{Fdef} and  the assumptions on $\ell$ in each of the three cases \eqref{eq.JXnonzero} -\eqref{eq.JXzero_Lambdazero} that $\Pbb F=0$,
  \[\Pbb^\perp F=\Ordc\Bigl(\frac{\lambda}R \Pbb U\otimes\Pbb U\Bigr),\]
  for some $\lambda=\Ord(R)$, and $F$ is continuous in $t$ including at $t=0$. Also, by \eqref{B0def} and \eqref{B1def}, we note that the matrix valued maps $B^0$ and $B^1$ satisfy the regularity and symmetry assumptions in property (v). Moreover, regarding property (vi), it is not difficult to verify that there exist positive constants $\theta, \beta$ such that
  \begin{equation}
    \Div\! B=
    \begin{pmatrix}
      0 & 0 & 0 & 0 & 0 & 0 \\
      0 & 0 & -\del\theta\psi & 0 & 0 & 0 \\
      0 & -\del\theta\psi & - \frac{\bfr}{t}(1+\psi) & 0 & 0 & 0 \\
      0 & 0 & 0 & 0 & 0 & 0 \\
      0 & 0 & 0 & 0 & 0 & 0 \\
      0 & 0 & 0 & 0 & 0 & 0 \\
    \end{pmatrix},
  \end{equation}
  satisfies $\Div\! B= \Ordc(\theta + \frac{\beta}{t}\Pbb U\otimes\Pbb U)$.

With the coefficient conditions (i)-(vi) from Appendix~A.1 of \cite{abio2020} verified, the proof, in particular the existence statement and the estimates \eqref{eq:energyestimates}-\eqref{eq:decayestimatesGIVP}, follows from an application of \cite[Theorem~A.1]{abio2020}, where we note that the constants $\kappa, \lambda, \text{ and } R$ can be chosen to satisfy $\kappa > \gamma_1 \lambda$. 
Finally, we conclude the proof by noting that the estimate \eqref{eq:wt1est} is a direct consequence of \eqref{eq:energyestimates} and \eqref{Pbbdef}.
\end{proof}

\subsection{Improved decay estimates for the solutions approaching the singularity}
\label{ss.improved_asymptotics}
Having established, for sufficiently small initial data $\norm{\mathring{U}}_{H^k}$, the existence of solutions to the GIVP consisting of \eqref{eq:mainevol1} and \eqref{eq:GIVP_CD}, the next step is to derive improved asymptotic estimates as $t \searrow 0$. 
This is carried out in the proof of the following proposition.

\begin{prop} \label{prop:hoexp}
Suppose $T_0>0$, $k \in \Zbb_{\geq 3}$, $\sigma\in (0,2\kappa_0/3)$, $Q, P \in \Rbb$, and suppose the constants $J_X,\Lambda,\ell,a\in \Rbb^3$ satisfy the conditions corresponding to one of the three cases \eqref{eq.JXnonzero}-\eqref{eq.JXzero_Lambdazero}.
Let 
\begin{equation}
\label{eq:HOCD}
\mathring{U}=\bigl(\mathring{w}_0,\mathring{w}_1,\mathring{w}_2,\mathring{\psi},\mathring{\xi},\mathring{\eta}\bigr)^{\tr}\in H^k(\Tbb,\Rbb^6),
\end{equation}
with $\mathring{w}_0>0$ and $\mathring\xi>0$, be chosen to
satisfy $\norm{\mathring{U}}_{H^k}< R_0$
for $R_0>0$ small enough so that it follows from Proposition~\ref{prop:main_existence} that there exists a unique solution
\begin{equation*}
U=\bigl(w_0,w_1,w_2,\psi,\xi,\eta\bigr)^{\tr} \in C^0\bigl((0,T_0],H^k(\Tbb,\Rbb^6)\bigr)\cap L^\infty\bigl((0,T_0],H^k(\Tbb,\Rbb^6)\bigr)\cap C^1\bigl((0,T_0],H^{k-1}(\Tbb,\Rbb^6)\bigr)
\end{equation*}
to the GIVP \eqref{eq:mainevol1} and \eqref{eq:GIVP_CD}.  
Let $(0, \wt_1, 0, \psit, 0,0)$ denote the limit $\lim_{t\searrow 0} \Pbb^\perp U(t)$ in $H^{k-1}(\Tbb,\Rbb^6)$.
Then there exist functions $\wt_2,\etat\in H^{k-2}(\Tbb)$ and $\ut, \nut\in H^{k-1}(\Tbb)$ such that
\begin{align}  
  \norm{t^{-1}w_2(t)-\ln(t)\del{\theta}\wt_{1}-\wt_2}_{H^{k-2}} &\lesssim
  t+t^{2\kappa_0-3\sigma}, \label{prop:hoexp.w2} \\
  \norm{t^{-1}\eta-\tilde{\eta}}_{H^{k-2}} &\lesssim
  t+t^{2\kappa_0-3\sigma}, \label{prop:hoexp.eta}
  \intertext{and}
\norm{\ln(w_0(t))+(a+\wt_1-\ell)\ln(t)-\ut}_{H^{k-1}}
&\lesssim t+t^{2\kappa_0-2\sigma}, \label{prop:hoexp.w0} \\
\norm{\ln(\xi)+(1-(a+\wt_1)^2)\ln(t)-\nut}_{H^{k-1}}
&\lesssim t+t^{2\kappa_0-2\sigma}, \label{prop:hoexp.xi}
\end{align}
for $0<t\le T_0$, where $\kappa_0$ is given by \eqref{eq:definekappa0}.
Furthermore, if the initial data \eqref{eq:HOCD} is chosen so that the constraint \eqref{eq:finalconstraint1} is satisfied at $t=T_0$, then 
the functions $\wt_1,\wt_2, \psit, \nut,\tilde{\eta}$ satisfy the asymptotic constraint equation
\begin{equation}
  \label{eq.AsymptoticConstraint}
  \partial_\theta \nut
        - 2 (a+\wt_1) \wt_2
        + \frac 12 (1+\psit)^{-1} \tilde\eta=0.
\end{equation}
\end{prop}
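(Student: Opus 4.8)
The plan is to read off from the individual rows of the symmetric hyperbolic system \eqref{eq:mainevol1} a family of scalar ODEs in $t$ for the components of $U$, and to integrate each one from $t$ down to $0$, feeding in the decay rates \eqref{eq:decayestimatesGIVP} and the limit \eqref{eq.PbbperpU(0)} supplied by Proposition~\ref{prop:main_existence}. Rows one, three and six give, respectively, $\del{t}\ln w_0=(\ell-a-w_1)/t$, $\del{t}(w_2/t)=\del{\theta}w_1/t$, and $\del{t}\ln\xi=\bigl((a+w_1)^2-1+w_2^2(1+\psi)+\bfr/4\bigr)/t$. Subtracting the anticipated singular parts, I would define $\ut$, $\wt_2$, $\nut$ to be the $t\searrow0$ limits of $\ln w_0+(a+\wt_1-\ell)\ln t$, of $t^{-1}w_2-\del{\theta}\wt_1\ln t$, and of $\ln\xi+(1-(a+\wt_1)^2)\ln t$. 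In each case the $t$-derivative of the subtracted quantity is $t^{-1}$ times a multiple of $(w_1-\wt_1)$, of $\del{\theta}(w_1-\wt_1)$, or of the combination $(a+w_1)^2-(a+\wt_1)^2+w_2^2(1+\psi)+\bfr/4$; inserting $\norm{w_1-\wt_1}_{H^{k-1}}\lesssim t+t^{2\kappa_0-2\sigma}$, $\norm{\Pbb U}_{H^{k-1}}\lesssim t^{\kappa_0-\sigma}$ and $\bfr\lesssim t^{2\kappa_0-2\sigma}$ (using that $H^{k-1}$ and $H^{k-2}$ are algebras for $k\ge3$) bounds each derivative by $1+t^{2\kappa_0-2\sigma-1}$, which is integrable down to $0$ because $\sigma<\kappa_0$. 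Integration then yields the limits together with \eqref{prop:hoexp.w0}, \eqref{prop:hoexp.xi} in $H^{k-1}$ and \eqref{prop:hoexp.w2} in $H^{k-2}$, the derivative loss for $w_2$ coming from the source $\del{\theta}w_1$. Crucially, \eqref{prop:hoexp.w2} upgrades the crude rate for $w_2$ to $\norm{w_2}_{H^{k-2}}\lesssim t(1+|\ln t|)\lesssim t^{1-\sigma}$.

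The estimate \eqref{prop:hoexp.eta} is the delicate one and I expect it to be the main obstacle. Row five reads $\del{t}\eta=t^{-1}\bigl((1-\bfr)\eta+S_\eta\bigr)$ with $S_\eta=4(1+\psi)(\bfr_{J_X}+\bfr_\Lambda-(a+w_1)\bfr)w_2$, so $\phi:=\eta/t$ satisfies the linear equation $\del{t}\phi+t^{-1}\bfr\,\phi=t^{-2}S_\eta$. Since $\bfr\lesssim t^{2\kappa_0-2\sigma}$ (each of its terms carries a factor $\xi^2$, and $\xi$ is a $\Pbb$-component), the coefficient $t^{-1}\bfr$ is integrable and the integrating factor $\exp\bigl(-\int_t^{T_0}\bfr/\tau\,d\tau\bigr)$ is bounded above and below and has a limit as $t\searrow0$. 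Convergence of $\phi$ then reduces to integrability of $t^{-2}S_\eta$: combining the improved bound $\norm{w_2}_{H^{k-2}}\lesssim t^{1-\sigma}$ from the previous step with $\bfr\lesssim t^{2\kappa_0-2\sigma}$ gives $\norm{S_\eta}_{H^{k-2}}\lesssim t^{2\kappa_0-3\sigma+1}$, hence $\norm{t^{-2}S_\eta}_{H^{k-2}}\lesssim t^{2\kappa_0-3\sigma-1}$, which is integrable near $0$ \emph{precisely} when $\sigma<2\kappa_0/3$. This is the origin both of the hypothesis on $\sigma$ and of the exponent $2\kappa_0-3\sigma$ in \eqref{prop:hoexp.eta}; note that the crude rate $\norm{w_2}\lesssim t^{\kappa_0-\sigma}$ alone would not render $t^{-2}S_\eta$ integrable, which is exactly why $w_2$ must be sharpened first. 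Setting $\etat:=\lim_{t\searrow0}\phi$ and integrating the equation for $\phi$ then gives \eqref{prop:hoexp.eta}.

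For the asymptotic constraint I would first secure that \eqref{eq:finalconstraint1} holds on all of $(0,T_0]$, not merely at $T_0$. The two constraints \eqref{eq:finalconstraint2}, \eqref{eq:finalconstraint3} simply encode $w_2=t\del{\theta}u$ and $\eta=t\del{\theta}\alpha$, so they are built into the change of variables \eqref{eq:w0}--\eqref{eq:defxi} and hold at $T_0$ for the data relevant to Theorem~\ref{thm.main_theorem}; a direct differentiation shows that $C_2:=\del{\theta}\ln w_0+t^{-1}w_2$ is conserved in $t$, and that $\Phi:=t\del{\theta}\ln\xi-2w_2(a+w_1)+\tfrac12(1+\psi)^{-1}\eta$ — which is \eqref{eq:finalconstraint1} divided by $\xi$ — satisfies $\del{t}\Phi=t^{-1}\Phi+\mathcal R$, with $\mathcal R$ a linear combination of $C_2$, $C_3:=\del{\theta}\psi-t^{-1}\eta$ and $\Phi$. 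Thus the three constraint quantities obey a closed homogeneous Fuchsian system and vanish throughout $(0,T_0]$ once they vanish at $T_0$; the one genuinely computational point here is the cancellation, after substituting the constraints into $\del{\theta}\bfr$, of the $\bfr_P$, $\bfr_{J_X}$ and $\bfr_\Lambda$ coefficients that exhibits $\mathcal R$ as a pure constraint combination. Finally I would divide $\Phi=0$ by $t$ and substitute the refined asymptotics \eqref{prop:hoexp.w2}, \eqref{prop:hoexp.eta}, \eqref{prop:hoexp.xi}: in $\del{\theta}\ln\xi$ and in $2t^{-1}w_2(a+w_1)$ the logarithmically divergent contributions $2(a+\wt_1)\del{\theta}\wt_1\ln t$ cancel identically, and passing to the limit $t\searrow0$ leaves exactly $\del{\theta}\nut-2(a+\wt_1)\wt_2+\tfrac12(1+\psit)^{-1}\etat=0$, which is \eqref{eq.AsymptoticConstraint}.
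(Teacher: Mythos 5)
Your proposal is correct and follows essentially the same route as the paper: the quantities you integrate, namely $t^{a+\wt_1-\ell}w_0$, $t^{-1}w_2-\ln(t)\del{\theta}\wt_1$, $t^{-1}\eta$ and $t^{1-(a+\wt_1)^2}\xi$, are exactly the components of the vector $W$ in \eqref{eq.Wdef} whose ODE system \eqref{W-evolve} the paper integrates, and your identification of where $\sigma<2\kappa_0/3$ and the sharpened bound $\norm{w_2}_{H^{k-2}}\lesssim t^{1-\sigma}$ enter matches the role of the uniform bound on $W_3$ and the $\ln(t)\,h\,\del{\theta}\wt_1$ source in \eqref{Fcdef}. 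The only organizational difference is that you decouple the rows sequentially (bootstrapping the improved $w_2$ estimate into the $\eta$ equation via an integrating factor) where the paper runs a single Gr\"onwall argument for the coupled $W$-system before improving $W_1$ and $W_6$; both are valid and yield the stated rates.
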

\begin{proof}
  The proof of this proposition follows closely the proof of Proposition~6.2 from  \cite{abio2020}. We start by writing 
  \eqref{eq:mainevol1} in the form
  \begin{equation}
      \label{eq.prop2.mainevol1}
      \del{t}U=\frac{1}{t}\Ac \Pbb U
      +\frac{1}{t}\bigl((B^0)^{-1}\Bc-\Ac\bigr)\Pbb U + \frac{1}{t}(B^0)^{-1}F-(B^0)^{-1}B^1\del{\theta}U,
  \end{equation}
  where
  \begin{equation} \label{Acdef}
      \Ac := \bigl((B^0)^{-1}\Bc\bigr)|_{U=0}=
      \begin{pmatrix} 
      \ell - a & 0 & 0 & 0 & 0&0\\
      0 & 1 & 0 & 0 & 0&0\\
      0 & 0 & 1 & 0 & 0&0\\
      0 & 0 & 0 & 1 & 0&0\\
      0 & 0 & 0 & 0 & 1&0\\
      0 & 0 & 0 & 0& 0 &a^2-1
    \end{pmatrix}.
  \end{equation}
  Defining
  \begin{equation} \label{Vdef}
    V= (V_1,V_2V_3,V_4,V_5,V_6)^{\tr} :=  e^{-\ln(t)\Ac\Pbb}U,
  \end{equation}
  where
  \begin{equation} \label{exp-mat}
      e^{-\ln(t)\Ac\Pbb} =  
      \begin{pmatrix} 
        \frac{1}{t^{\ell-a}} & 0 & 0 & 0 & 0 & 0\\
        0 & 1 & 0 & 0 & 0 & 0\\
        0 & 0 & \frac{1}{t} & 0 & 0 & 0\\
        0 & 0 & 0 & 1 & 0 & 0\\
        0 & 0 & 0 & 0 & \frac{1}{t} & 0\\
        0 & 0 & 0 & 0 & 0 & \frac{1}{t^{a^2-1}}
      \end{pmatrix},
  \end{equation}
  a short calculation shows that $V$ satisfies
  \begin{equation} 
    \label{eq.prop2.mainevol2}
    \del{t}V=
    \frac{1}{t}e^{-\ln(t)\Ac\Pbb}\bigl((B^0)^{-1}\Bc-\Ac\bigr)\Pbb U + \frac{1}{t} e^{-\ln(t)\Ac\Pbb}(B^0)^{-1}F- e^{-\ln(t)\Ac\Pbb}(B^0)^{-1}B^1\del{\theta}U.
  \end{equation}
  We then observe that this equation can be written in the form
  \begin{equation}
    \partial_t V    
    =
    \frac 1t\begin{pmatrix}
      -w_1 V_1\\
      \frac{1}{2}(w_2\eta + \bfr_{J_X} + \bfr_\Lambda-(a+w_1)\bfr)+t(1+\psi)\partial_\theta w_2\\
      \partial_\theta w_1\\
      -(1+\psi) \bfr\\
      -\bfr V_5+4(1+\psi)(\bfr_{J_X} + \bfr_\Lambda - (a+w_1)\bfr)V_3\\
      (\bigl(2aw_1+w_1^2+w_2^2(1+\psi)\bigr) + \frac{\bfr}4) V_6
    \end{pmatrix},
  \end{equation}
  or equivalently as
  \begin{equation} 
    \label{eq.prop2.mainevol3}
    \del{t}V = 
    \frac{1}{t}\begin{pmatrix}0\\0 \\ \del{\theta}\wt_1\\0\\0\\0\end{pmatrix}
    + \frac{1}{t}
    \begin{pmatrix} 
      (\wt_1-w_1)V_1 - \wt_1V_1\\0 \\ 0\\0\\h V_3 -\bfr V_5\\(2a\wt_1+\wt_1^2 + f + \frac \bfr4)V_6 
    \end{pmatrix} 
    + \frac{1}{t}
    \begin{pmatrix}
0\\
      \frac{1}{2}(\eta w_2 + \bfr_{J_X} + \bfr_\Lambda - (a+w_1)\bfr) +t(1+\psi)\del{\theta}w_2 \\
      \del{\theta}(w_1-\wt_1)\\
      -(1+\psi)\bfr \\ 0 \\ 0
    \end{pmatrix}, 
  \end{equation}
  where
  \begin{align}
    \label{eq.fdef}
    f=2aw_1+w_1^2-(2a \wt_1+\wt_1^2)+(1+\psi)w_2^2, \\
    \intertext{and}  
    \label{eq.hdef}
    h= 4((a+w_1)\bfr - \bfr_{J_X} - \bfr_\Lambda)(1+\psi).
  \end{align}
In order to integrate the equation, we introduce a third variable 
  \begin{equation} 
    \label{eq.Wdef}
    W:= 
    \begin{pmatrix} 
      t^{\wt_1}V_1 \\ V_2\\V_3-\ln(t)\del{\theta}\wt_1 \\
      V_4\\V_5 \\ t^{-(2a \wt_1+\wt_1^2)}V_6
    \end{pmatrix}
    = 
    \begin{pmatrix} 
      t^{a+\wt_1 - \ell}w_0 \\ w_1 \\ t^{-1}w_2-\ln(t)\del{\theta}\wt_1 \\
      \psi \\ t^{-1}\eta \\ t^{1-(a+\wt_1)^2}\xi
    \end{pmatrix},
  \end{equation}
  in terms of which we can express \eqref{eq.prop2.mainevol3} as
  \begin{equation} \label{W-evolve}
      \del{t}W=\Cc W + \Fc
  \end{equation}
  where
  \begin{align}
    \Cc &= \frac{1}{t}
    \begin{pmatrix} 
      \wt_1-w_1 & 0 & 0 & 0 & 0 & 0\\
      0 & 0 & 0 & 0 & 0 & 0\\
      0 & 0 & 0 & 0 & 0 & 0\\
      0 & 0 & 0 & 0 & 0 & 0\\
      0 & 0 & h & 0 & -\bfr & 0\\
      0 & 0 & 0 & 0 & 0 & \frac 14 \bfr + f 
    \end{pmatrix}, \label{Ccdef}\\
      \intertext{and}    
      \Fc &= \frac{1}{t}
    \begin{pmatrix}
0\\
      \frac{1}{2}(\eta w_2 + \bfr_{J_X} + \bfr_\Lambda - (a+w_1)\bfr) +t(1+\psi)\del{\theta}w_2 \\
      \del{\theta}(w_1-\wt_1)\\
      -(1+\psi)\bfr \\ -\ln(t)h\del{\theta}\wt_1 \\ 0
    \end{pmatrix}. \label{Fcdef}
  \end{align}  

  Integrating \eqref{W-evolve} in time yields
\begin{equation} \label{W-int}
    W(t)=W(t_0) + \int_{t_0}^t \Cc(\tau, W(\tau))W(\tau)+\Fc(\tau,W(\tau),\partial_\theta W(\tau))\,d\tau
\end{equation}
for $0<t\le t_0\le T_0$.
By the triangle inequality, and the Sobolev and product estimates --- see Proposition 2.4 and 3.7 from Chapter 13 of \cite{TaylorIII:1996} ---
we find, since $k-2\geq 1>1/2$, that
\begin{equation*}
    \norm{W(t)}_{H^{k-2}} \leq \norm{W(T_0)}_{H^{k-2}}+\int^{T_0}_t 
    \norm{\Cc(\tau, W(\tau))}_{H^{k-2}}
    \norm{W(\tau)}_{H^{k-2}}+\norm{\Fc(\tau, W(\tau),\partial_\theta W(\tau))}_{H^{k-2}}\,d\tau.
\end{equation*}
From this, we conclude via an application of Gr\"onwall's inequality that
\begin{equation} \label{W-bnd-A}
\norm{W(t)}_{H^{k-2}} \leq e^{\int^{T_0}_t 
    \norm{\Cc(\tau, W(\tau))}_{H^{k-2}}\,d\tau}
\biggl(\norm{W(T_0)}_{H^{k-2}}+\int^{T_0}_t\norm{\Fc(\tau, W(\tau),\partial_\theta W(\tau))}_{H^{k-2}}\,d\tau\biggr).
\end{equation}
As a consequence of \eqref{eq.fdef}, \eqref{eq.hdef}, \eqref{eq.bdef2}, \eqref{Ccdef}, and \eqref{Fcdef}, we observe, with the help of the energy and decay estimates \eqref{eq:energyestimates}-\eqref{eq:decayestimatesGIVP} and the Sobolev and product estimates --- see Proposition 2.4 and 3.7 from Chapter 13 of \cite{TaylorIII:1996} --- that
\begin{equation} \label{W-bnd-B}
    \int_{t_0}^t\norm{\Cc(\tau, W(\tau))}_{H^{k-2}}+ \norm{\Fc(\tau, W(\tau),\partial_\theta W(\tau))}_{H^{k-2}}\,d\tau \lesssim \bigl(t+t^{2\kappa_0-3\sigma}\bigr)-\bigl(t_0+t_0^{2\kappa_0-3\sigma}\bigr).
\end{equation}
Thus, by \eqref{W-bnd-A}, we have 
\begin{equation*}
   \sup_{0<t<T_0} \norm{W(t)}_{H^{k-2}} \lesssim 1.
\end{equation*}
With the help of this uniform bound, we deduce
from \eqref{W-int} and another application of the product, Sobolev, and triangle inequalities, that
\begin{align*}
\norm{W(t)-W(t_0)}_{H^{k-2}} 
&\leq   \int_{t_0}^t\norm{\Cc(\tau, W(\tau))}_{H^{k-2}}
\norm{W(\tau)}_{H^{k-2}}\,d\tau + \int_{t_0}^t\norm{\Fc(\tau, W(\tau),\partial_\theta W(\tau))}_{H^{k-2}}
\,d\tau \\
&\lesssim  \int_{t_0}^t\norm{\Cc(\tau, W(\tau))}_{H^{k-2}}+ \norm{\Fc(\tau, W(\tau),\partial_\theta W(\tau))}_{H^{k-2}}\,d\tau.
\end{align*}
From this inequality and \eqref{W-bnd-B}, we conclude that the limit $\lim_{t\searrow 0} W(t)$ converges to an element of $H^{k-2}(\Tbb)$, and denoting this element by $W(0)$, we further conclude that $W(t)$ can be extended to a uniformly continuous map
\begin{equation*}
    W\in C^0\bigl([0,T_0],H^{k-2}(\Tbb,\Rbb^6)\bigr)
\end{equation*}
that satisfies
\begin{equation} \label{W-bnd-C}
\norm{W(t)-W(0)}_{H^{k-2}} 
\lesssim  t+t^{2\kappa_0-3\sigma}
\end{equation}
for $0<t\le T_0$. 
It is then clear that the estimates \eqref{prop:hoexp.w2}-\eqref{prop:hoexp.eta} follow directly from \eqref{eq.Wdef} and \eqref{W-bnd-C}.

With the main estimate for $W$ complete, we now turn our attention to establishing improved estimates for the first and last components of $W$. We begin by considering the first component, that is, $W_1=t^{a+\wt_1-\ell}w_0$. Since $\mathring w_0>0$, it follows that $w_0$ and hence $W_1$ are strictly positive everywhere these maps are defined. Consequently, we see by \eqref{W-evolve} that 
\[\partial_t \ln(W_1)=(\wt_1-w_1)/t.\]
Integrating this equation in time and invoking once more the Sobolev and product inequalities, we find, with the help of  \eqref{eq:decayestimatesGIVP}, that
\[\norm{\ln(W_1(t))-\ln(W_1(\tilde t))}_{H^{k-1}}\lesssim \int_{\tilde t}^t \bigl(s + s^{2\kappa_0-2\sigma} \bigr) s^{-1}ds\lesssim (t+t^{2\kappa_0-2\sigma})-({\tilde t}+{\tilde t}^{2\kappa_0-2\sigma})\]
for any $\tilde{t}\in (0,T_0]$ and $t\in (\tilde{t},T_0]$.
So, given any sequence $(t_n)$ in $(0,T_0]$ approaching zero, we deduce that the sequence $(\ln(W_1(t_n)))$ converges in the $H^{k-1}$-norm to a limit, which we preliminarily call $\ln(W_1(0))$, and that
\begin{equation}
  \label{prop:hoexp.w0.pre}
  \norm{\ln(W_1(t))-\ln(W_1(0))}_{H^{k-1}}\lesssim t+t^{2\kappa_0-2\sigma}.
\end{equation}
Since it follows from \eqref{eq.Wdef} together with \eqref{first_order_variables} and \eqref{eq:w0}  that 
\[\ln(W_1(t))=\ln(w_0(t))+(a+\wt_1-\ell)\ln(t),\]
we obtain \eqref{prop:hoexp.w0} from \eqref{prop:hoexp.w0.pre} if we rename $\ln(W_1(0))$ by $\ut$.

In the same way we can improve the estimate for $W_6=t^{1-(a+\wt_1)^2}\xi$, the last component of $W$. Since $\mathring \xi>0$, it follows that $\xi$ and hence $W_6$ are strictly positive everywhere these maps are defined. By \eqref{W-evolve}, we see that $w_6$ satisfies
\begin{equation}\label{w6-evolve}
\partial_t \ln(W_6)=(\frac 14\bfr-f)/t.
\end{equation}
Essentially the same arguments used for the component $W_1$ can be applied to establish that given any sequence $(t_n)$ in $(0,T_0]$ approaching zero, the sequence $(\ln(W_6(t_n)))$ converges in the $H^{k-1}$-norm to a limit, which we preliminarily call $\ln(W_6(0))$, and that
\begin{equation}
  \label{prop:hoexp.xi.pre}
  \norm{\ln(W_6(t))-\ln(W_6(0))}_{H^{k-1}}\lesssim t+t^{2\kappa_0-2\sigma}.
\end{equation}
However, it follows from \eqref{first_order_variables}, \eqref{eq:defxi}, and \eqref{eq.Wdef} that we have 
\[\ln(W_6)=\ln(\xi)+(1-(a+\wt_1)^2)\ln(t)=(a+\wt_1)^2\ln(t)+\nu.\]
We then obtain \eqref{prop:hoexp.xi} from \eqref{prop:hoexp.xi.pre} if we rename $\ln(W_6(0))$ by $\nut$.

To complete the proof, we note that the asymptotic constraint \eqref{eq.AsymptoticConstraint} can be established using the same  argument as in the proof of Prop.~6.2 in \cite{abio2020} by studying the limit $t\searrow 0$ of \eqref{eq:finalconstraint1} using \eqref{eq:wt1est} and \eqref{prop:hoexp.w2}-\eqref{prop:hoexp.xi} together with the Sobolev and product estimates observing that $k\ge 3$.
\end{proof}

\subsection{Solutions to the full polarised $\Tbb^2$-symmetric  Einstein equations}

Propositions \ref{prop:main_existence} and \ref{prop:hoexp} establish the global existence of solutions and detailed asymptotic estimates for solutions of the main evolution system, consisting of equations \eqref{eq:mainevol1} and \eqref{Pbbdef}--\eqref{Fdef}. 
In addition, the propagation of the constraint \eqref{eq:finalconstraint1} follows from arguments much like those in \cite{isenberg1999}, and its asymptotic form in our variables is established in Proposition~\ref{prop:hoexp}.
To obtain solutions of the full polarised $\Tbb^2$-symmetric Einstein equations, for initial data sufficiently close to $\Lambda$-Kasner initial data, we now impose \eqref{eq:mainevol2}, \eqref{eq:mainevol3}, \eqref{eq:finalconstraint2}, and \eqref{eq:finalconstraint3}.

\begin{prop}
  \label{prop:fullEFE}
  Consider the same conditions as specified in the hypothesis for Proposition~\ref{prop:hoexp}, and
  let $U$ be the solution to the GIVP \eqref{eq:mainevol1} and \eqref{eq:GIVP_CD}  determined by initial data $\mathring{U}$ with $\norm{\mathring{U}}_{H^k}< R_0$ and $\mathring{\xi}>0$ and $\mathring{w_0}>0$.
  Then for given, not necessarily small, $\mathring{G}, \mathring{H}\in H^{k}(\Tbb)$, the Cauchy problem consisting of evolution equations \eqref{eq:mainevol2}-\eqref{eq:mainevol3} and the initial condition $(G,H)|_{t=T_0}=(\mathring{G},\mathring{H})$ has a unique solution
  \[ (G, H) \in C^1\bigl((0,T_0],H^k(\Tbb,\Rbb^2)\bigr)\cap L^\infty\bigl((0,T_0],H^k(\Tbb,\Rbb^2)\bigr),\]
  and there exist functions $\Gt,\Ht\in H^{k-1}(\Tbb)$ such that
  \begin{align}
    \label{eq:GLimit}
    \Bnorm{G(t)-\Gt}_{H^{k-1}}\lesssim&|Q||P|t^{2(\min_{\theta\in(0,2\pi]}\{(a+\wt_1)^2\}-1)-\sigma}
    +|J_X|t^{2(\min_{\theta\in(0,2\pi]}\{(a+\wt_1)^2-2\wt_1\}-1+2\ell-2a)-\sigma},\\
    \label{eq:HLimit}
    \Bnorm{H(t)-\Ht}_{H^{k-1}}\lesssim&|P|t^{2(\min_{\theta\in(0,2\pi]}\{(a+\wt_1)^2\}-1)- \sigma},
  \end{align}  
  for all $t\in (0,T_0]$.
  Moreover, if the initial data is chosen to also satisfy constraints \eqref{eq:finalconstraint2}-\eqref{eq:finalconstraint3} at $t=T_0$, then the functions $\ut$, $\wt_2$, $\psit$ and $\etat$ satisfy the asymptotic constraint equations
\begin{align}
  \label{eq:AsymptConstr2}
  \partial_\theta\ut+\wt_2=0, \\
  \label{eq:AsymptConstr3}
  \del{\theta}\psit=\etat.
\end{align}
\end{prop}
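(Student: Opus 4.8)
The plan is to treat \eqref{eq:mainevol2}--\eqref{eq:mainevol3} as a pair of decoupled, inhomogeneous first-order ordinary differential equations in $t$, with $\theta$ entering only as a parameter, whose right-hand sides are \emph{explicit} in the already-constructed fields $w_0,\psi,\xi$ of the solution $U$ furnished by Propositions~\ref{prop:main_existence} and~\ref{prop:hoexp}, together with the fixed constants $P,Q,J_X,\ell$. I would therefore produce $G$ and $H$ by direct quadrature,
\[
  G(t)=\mathring{G}-\int_t^{T_0}\!\Bigl(\frac{PQ}{s}\sqrt{1+\psi}\,\xi^2-\frac{J_X}{s}\,s^{1-2\ell}\sqrt{1+\psi}\,\xi^2 w_0^4\Bigr)\,ds,
  \qquad
  H(t)=\mathring{H}+\int_t^{T_0}\!\frac{P}{s}\sqrt{1+\psi}\,\xi^2\,ds,
\]
with all fields evaluated at time $s$. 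Since $w_0,\psi,\xi\in C^0\bigl((0,T_0],H^k\bigr)$, the algebra property of $H^k$ for $k\ge 3$ together with composition (Moser) estimates for $\sqrt{1+\psi}$ (valid because $\norm{\psi}_{L^\infty}<R<1$) show that the integrands lie in $C^0\bigl((0,T_0],H^k\bigr)$. This yields the stated regularity $(G,H)\in C^1\bigl((0,T_0],H^k\bigr)$ and uniqueness at once, and reduces everything else to integrability and asymptotics of the integrands.

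To obtain the $L^\infty$ bound and the limit functions $\Gt,\Ht$, the key point is integrability of the integrands as $t\searrow 0$. Inserting the sharp asymptotics $\xi\sim t^{(a+\wt_1)^2-1}e^{\nut}$ and $w_0\sim t^{\ell-a-\wt_1}e^{\ut}$ supplied by \eqref{prop:hoexp.w0}--\eqref{prop:hoexp.xi}, the $H$-integrand behaves like $t^{\,2(a+\wt_1)^2-3}$, and the two $G$-integrands like comparable powers, the $J_X$ contribution being present only in Case~1. Because $a^2>1$ in each of \eqref{eq.JXnonzero}--\eqref{eq.JXzero_Lambdazero} and $\wt_1$ is small, we have $(a+\wt_1)^2>1$ pointwise in $\theta$, so every exponent exceeds $-1$, the integrals converge, and I define $\Gt,\Ht$ to be the resulting limits. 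Integrating the corresponding bounds from $0$ to $t$ then gives \eqref{eq:GLimit}--\eqref{eq:HLimit}; for instance $\norm{H(t)-\Ht}_{H^{k-1}}\lesssim |P|\,t^{\,2(\min_\theta\{(a+\wt_1)^2\}-1)-\sigma}$, where the minimum over $\theta$ appears because the exponent depends on $\theta$ through $\wt_1(\theta)$, and the loss $-\sigma$ comes from the remainders in Proposition~\ref{prop:hoexp}. All of these manipulations --- bounding $\sqrt{1+\psi}$, $e^{\pm\nut}$, $e^{\ut}$ and the $\theta$-dependent powers $t^{\pm\wt_1(\theta)}$ in $H^{k-1}$ --- are handled by the Sobolev product and Moser composition estimates, using $k-1\ge 2>1/2$.

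For the asymptotic constraints I would first invoke propagation of \eqref{eq:finalconstraint2}--\eqref{eq:finalconstraint3}: if these hold at $t=T_0$ they persist on all of $(0,T_0]$, by the same reasoning applied to \eqref{eq:finalconstraint1} and as in \cite{isenberg1999}. Passing to the limit $t\searrow 0$ then delivers \eqref{eq:AsymptConstr2}--\eqref{eq:AsymptConstr3}. In \eqref{eq:finalconstraint3}, written as $\partial_\theta\psi=t^{-1}\eta$, the left side converges to $\partial_\theta\psit$ while, by \eqref{prop:hoexp.eta}, the right side converges to $\etat$, giving \eqref{eq:AsymptConstr3}. For \eqref{eq:finalconstraint2}, since $w_0>0$ I rewrite it as $\partial_\theta\ln w_0=-t^{-1}w_2$; using $\ln w_0=\ln W_1+(\ell-a-\wt_1)\ln t$ from \eqref{eq.Wdef} together with the expansion \eqref{prop:hoexp.w2} of $t^{-1}w_2$, the singular $\ln t\,\partial_\theta\wt_1$ contributions on the two sides cancel, and letting $t\searrow 0$ (so that $\ln W_1\to\ut$) produces $\partial_\theta\ut=-\wt_2$, which is \eqref{eq:AsymptConstr2}. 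This cancellation is precisely what the $\ln t\,\partial_\theta\wt_1$ subtraction in the third component of \eqref{eq.Wdef} was designed to accommodate.

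I expect existence, uniqueness and regularity to be essentially immediate, since the system is a pure quadrature driven by already-known data. The main obstacle is the sharpness of the decay estimates \eqref{eq:GLimit}--\eqref{eq:HLimit}: one must confirm pointwise integrability near $t=0$ (which rests on $a^2>1$ and the smallness of $\wt_1$), correctly track the minimum over $\theta$ of the $\theta$-dependent exponents $(a+\wt_1(\theta))^2$, and estimate the compositions and the powers $t^{\pm\wt_1(\theta)}$ in the $H^{k-1}$ norm without incurring more than the admissible loss $\sigma$. A secondary delicate point is the logarithm cancellation in the limit of \eqref{eq:finalconstraint2} described above, which is what makes the asymptotic constraint \eqref{eq:AsymptConstr2} well-defined despite the $\ln t$ divergence of $t^{-1}w_2$.
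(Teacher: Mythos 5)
Your proposal is correct and follows essentially the same route as the paper: direct quadrature of \eqref{eq:mainevol2}--\eqref{eq:mainevol3}, substitution of the asymptotics \eqref{prop:hoexp.w0} and \eqref{prop:hoexp.xi} to verify integrability of the integrands near $t=0$ (using $a^2>1$, $\ell-a>0$ and the smallness of $\wt_1$), completeness of $H^{k-1}$ to define $\Gt,\Ht$, and constraint propagation followed by the limit $t\searrow 0$ --- including the cancellation of the $\ln t\,\partial_\theta\wt_1$ terms built into \eqref{eq.Wdef} --- to obtain \eqref{eq:AsymptConstr2}--\eqref{eq:AsymptConstr3}.
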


\begin{rem}
  The constants $Q,P,J_X$ have been included in the estimates \eqref{eq:GLimit} and \eqref{eq:HLimit} in order to clarify which terms are relevant in each of the Cases 1 to 3 (cf. \eqref{eq.JXnonzero}-\eqref{eq.JXzero_Lambdazero}).
  In particular, in Cases 2 and 3, we note that $J_X=0$ and the second term in \eqref{eq:GLimit} vanishes.
\end{rem}

\begin{proof} 
Integrating \eqref{eq:mainevol1}-\eqref{eq:mainevol2}, we find 
\begin{align*}
    &G(t) = \mathring{G} 
     + QP  \int_{T_0}^t \sqrt{1 + \psi}\, \underbrace{e^{2\nut} \exp\bigl(2(\ln(\xi)+(1-(a+\wt_1)^2)\ln(s)-\nut)\bigr)
      s^{2((a+\wt_1)^2-1)}}_{=\xi^2(s)} s^{-1}\rmd s \\ \nonumber
    & - J_X \int_{T_0}^t \Bigl(\sqrt{1 + \psi}\, s^{1-2\ell}
      \underbrace{e^{2\nut} 
      \exp\bigl(2(\ln(\xi)+(1-(a+\wt_1)^2)\ln(s)-\nut)\bigr)
      s^{2((a+\wt_1)^2-1)}}_{=\xi^2(s)}\\
      &\qquad\qquad\qquad
      \underbrace{e^{4\ut} 
      \exp\bigl({4(\ln(w_0(t))+(a+\wt_1-\ell)\ln(s)-\ut)}\bigr)
      s^{4(\ell-(a+\wt_1))}}_{=w_0^4(s)}s^{-1}\Bigr)      
      \rmd s, \\
    &H(t) = \mathring{H}
      - P\int_{T_0}^t \sqrt{1 + \psi}\, \underbrace{e^{2\nut} \exp\bigl(2(\ln(\xi)+(1-(a+\wt_1)^2)\ln(s)-\nut)\bigr)
      s^{2((a+\wt_1)^2-1)}}_{=\xi^2(s)} s^{-1} \rmd s,
\end{align*}
where $\mathring G$ and $\mathring H$ are the initial data for $G$ and $H$ at $t=T_0$. For arbitrary $t$ and $t_0$ with $0<t_0\le t\le T_0$, we therefore find
\begin{equation}
  \label{eq:Gestimate}
  \begin{split}
  \norm{G(t)-G(t_0)}_{H^{k-1}}\lesssim &|Q||P|\bigl(t^{2(\min_{\theta\in(0,2\pi]}\{(a+\wt_1)^2\}-1)-\sigma}-t_0^{2(\min_{\theta\in(0,2\pi]}\{(a+\wt_1)^2\}-1)-\sigma}\bigr)\\
  +&|J_X|\bigl(
    t^{2(\min_{\theta\in(0,2\pi]}\{(a+\wt_1)^2-2\wt_1\}-1+2\ell-2a)-\sigma}
    -t_0^{2(\min_{\theta\in(0,2\pi]}\{(a+\wt_1)^2-2\wt_1\}-1+2\ell-2a)-\sigma}
  \bigr)
\end{split}
\end{equation}
by using the Sobolev and product estimates as well as \eqref{prop:hoexp.w0} and \eqref{prop:hoexp.xi} in a similar fashion as in the proof of Proposition~6.3 from \cite{abio2020}. Now, we claim that our hypotheses guarantee that in each of the three cases \eqref{eq.JXnonzero}-\eqref{eq.JXzero_Lambdazero} each of the exponents of $t$ in \eqref{eq:Gestimate} is positive. To see why this is the case, we note that $a^2>1$ and $\ell-a>0$ in all three cases \eqref{eq.JXnonzero}-\eqref{eq.JXzero_Lambdazero} and  $a<-1$ if $J_X\not =0$ (Case 1, see \eqref{eq.JXnonzero}). Because of these inequalities, we can guarantee the positivity
of the exponents by choosing 
$R_0$ and therefore  $\wt_1$ (according to \eqref{eq:CDSmallness} and \eqref{eq:wt1est}) small enough to ensure that
$(a+\wt_1)^2 -1> 0$ and $(a+\wt_1)^2-2\wt_1 -1> 0$.
It then follows from the completeness of the Sobolev space $H^{k-1}$ that $G(t)$ converges to a limit $\Gt\in H^{k-1}$ at $t=0$ and we obtain \eqref{eq:GLimit} from \eqref{eq:Gestimate}. For $H$, similar arguments
can be used to show that
\begin{equation}
  \norm{H(t)-H(t_0)}_{H^{k-1}}\lesssim |P|\bigl(t^{2(\min_{\theta\in(0,2\pi]}\{(a+\wt_1)^2\}-1)- \sigma}-t_0^{2(\min_{\theta\in(0,2\pi]}\{(a+\wt_1)^2\}-1)- \sigma}\bigr),
\end{equation}
which then allows us to deduce the existence of a  limit $\Ht\in H^{k-1}$ at $t=0$ for which \eqref{eq:HLimit} holds.

Assuming now that the constraint \eqref{eq:finalconstraint2} is satisfied at $t = T_0$, it follows from the evolution system \eqref{eq:mainevol1} and \eqref{Pbbdef}-\eqref{Fdef} that the constraint remains satisfied for all $t\in (0, T_0]$. 
The evolution equation for $w_0$ obtained from \eqref{eq:mainevol1} implies that $w_0 > 0$ for as long as this quantity is defined if the initial data $\mathring{w_0}$ is positive. 
This allows us to divide \eqref{eq:finalconstraint2} by $w_0$ to obtain
\[\del{\theta}\ln(w_0) + t^{-1}w_2=0.\]
Since this equation is satisfied at all $t\in (0,T_0]$, we conclude from the triangle inequality 
\begin{align*}
  0 = & \Bnorm{\del{\theta}\ln(w_0) + t^{-1}w_2}_{H^{k-2}} \\
  =&\Bnorm{
      \del{\theta}(\ln(w_0) +(a+\wt_1-\ell)\ln(t)-\ut)
      +(t^{-1}w_2-\del{\theta}\wt_1\ln(t)-\wt_2)
      +(\del{\theta}\ut +\wt_2)
    }_{H^{k-2}}\\
  \ge & -\Bnorm{\ln(w_0) +(a+\wt_1-\ell)\ln(t)-\ut}_{H^{k-1}}
      -\Bnorm{ (t^{-1}w_2 -\ln(t)\del{\theta}\wt_{1}-\wt_2)}_{H^{k-2}}
      +\Bnorm{\del{\theta}\ut+\wt_2}_{H^{k-2}}.
\end{align*}
But it follows from \eqref{prop:hoexp.w2} and \eqref{prop:hoexp.w0} that the first two terms approach zero as $t\searrow 0$, and consequently, letting $t\searrow 0$ in the above inequality yields \eqref{eq:AsymptConstr2}.
The asymptotic constraint \eqref{eq:AsymptConstr3} can be established in a similar manner.
\end{proof}

\subsection{Completing the proof of the main result}
\label{s.proof_main_result}
We are now ready to complete the proof of Theorem~\ref{thm.main_theorem} by combining the results of the above subsections. We begin by choosing an integer $k\ge 3$ and constants $Q, \Lambda, J_X\in \Rbb$. We then select constants $P$ and $K$ according to \eqref{eq.JXnonzero.1}-\eqref{eq.JXzero_Lambdazero.1}  
and set $a=(1-K)/2$. It is important to note that, (i) the constants $\Lambda$ and $K$ determine the background $\Lambda$-Kasner solution $g^{(K)}$ to be perturbed, (ii) the constant $\kappa_0$, as defined by \eqref{eq:definekappa0}, agrees in each of the three cases with \eqref{eq:kappa0final}, and, (iii) that the constant $\ell$ is determined according to \eqref{eq:kappa0final}.

Fixing a $T_0>0$ which is small enough to make sense of the expansions \eqref{eq:KasnersolFirst}--\eqref{eq:KasnersolLast}, it follows from \eqref{first_order_variables}--\eqref{eq:defxi} that the initial data for the evolution system \eqref{eq:mainevol1}--\eqref{eq:mainevol3} at $t=T_0$ determined by the $\Lambda$-Kasner metric $g^{(K)}$ is of the form
  \begin{gather*}
    w_0^{(K)}(T_0)=T_0^{\ell-a}(1+\ldots),\quad
    w_1^{(K)}(T_0)=0+\ldots,\quad
    w_2^{(K)}(T_0)=0,\quad
    \psi^{(K)}(T_0)=0+\ldots,\\
    \eta^{(K)}(T_0)=0,\quad
    \xi^{(K)}(T_0)=T_0^{a^2-1}(1+\ldots),\quad
                    Q^{(K)}=G^{(K)}(T_0)=H^{(K)}(T_0)=0.
  \end{gather*}
  For a sufficiently small choice of $R_0>0$, this Kasner data satisfies the condition $\norm{\mathring{U}}_{H^k}< R_0$ (see \eqref{eq:CDSmallness}) of Proposition~\ref{prop:main_existence} provided $T_0$ is chosen sufficiently small since $a,\ell$ satisfy $a^2>1$ and $\ell-a>0$ by \eqref{eq.JXnonzero.1}-\eqref{eq:kappa0final}. 
  It follows from Proposition~\ref{prop:main_existence} that we can then interpret solutions to \eqref{eq:mainevol1}-\eqref{eq:mainevol3} that are generated from general initial data $\mathring{U}$ at $t=T_0$ that satisfies $\norm{\mathring{U}}_{H^k}< R_0$ for sufficiently small $T_0$ and sufficiently small $R_0>0$ (but not necessarily small $\norm{\mathring{G}}_{H^k}$, $\norm{\mathring{H}}_{H^k}$, $Q$, $P$ and $J_X$) as \emph{perturbed $\Lambda$-Kasner solutions} provided the initial data satisfies the constraints \eqref{eq:finalconstraint1}-\eqref{eq:finalconstraint4} and the inequalities $\mathring{w_0},\mathring{\xi}>0$.

Now, an initial data set $(\mathring u, \mathring v, \mathring\nu, \mathring\alpha,\mathring{G}, \mathring{H})$ for the original variables $(u,\nu,\alpha,G,H)$ that satisfies the constraints \eqref{eq:vacconstraint} and \eqref{eq:vacevolLast} with $\partial_t u(T_0)=\mathring v$, and the smallness condition \eqref{eq:PerturbedKasnerCauchyData}, determines, as a consequence  of \eqref{eq:CDSmallness} and the transformation \eqref{first_order_variables}-\eqref{eq:defxi}, an initial data set $\mathring{U}$ for the system \eqref{eq:mainevol1}-\eqref{eq:mainevol3} satisfying $\norm{\mathring{U}}_{H^k}< R_0$ at $t=T_0$, the constraints \eqref{eq:finalconstraint1}-\eqref{eq:finalconstraint4} and the inequalities $\mathring{w_0},\mathring{\xi}>0$.
Consequently, for $T_0,R_0$ chosen sufficiently small, Proposition~\ref{prop:main_existence} establishes the existence of a classical solution $(u,\nu,\alpha,G,H)$ to  \eqref{eq:vacevolFirst}-\eqref{eq:vacevolH} on $(0,T_0]$ 
with regularity determined by \eqref{eq:thmreg1} and \eqref{eq:thmreg2}. 
The limits $\wt_1$ and $\psit$ in $H^{k-1}(\Tbb)$ together with \eqref{eq:decayestimatesGIVP} imply that \eqref{eq:ualphaest} holds for $\Kt$ and $\alphat$ in $H^{k-1}(\Tbb)$ given by $\Kt=K-2\wt_1$ and $\alphat=1+\psit$. 
We further observe that \eqref{eq:smallnessKalpha} is a direct consequence of \eqref{eq:wt1est}.
Proposition~\ref{prop:hoexp} implies the existence of limits $\ut, \nut\in H^{k-1}(\Tbb)$ for which \eqref{eq:unuest} holds as a consequence of \eqref{prop:hoexp.w0} and \eqref{prop:hoexp.xi}.
We additionally note that, due to the asymptotic constraints \eqref{eq:AsymptConstr2}-\eqref{eq:AsymptConstr3} from Proposition~\ref{prop:fullEFE}, the estimates implied by \eqref{prop:hoexp.w2} and \eqref{prop:hoexp.eta} also follow from \eqref{eq:ualphaest} and are therefore redundant, and that the asymptotic constraint \eqref{eq.AsymptoticConstraint} takes the form \eqref{eq.AsymptoticConstraintFinal}.  
Finally, the estimates \eqref{eq:GLimit.Final} and \eqref{eq:HLimit.Final} follow from the estimates \eqref{eq:GLimit} and \eqref{eq:HLimit} from Proposition~\ref{prop:fullEFE}, where
we note that $\ell=1/2$ in the case $J_X\not=0$ (Case~1).

To complete the proof, we have computed the Kretschmann scalar $\mathcal{I}$ for a metric of the form \eqref{T2metric} using computer algebra \cite{xcoba,xact}. 
Using the resulting expression for $\mathcal{I}$, it is then not difficult to verify, with the help of
Propositions~\ref{prop:main_existence}, \ref{prop:hoexp} and \ref{prop:fullEFE}, the definitions \eqref{first_order_variables} and \eqref{eq:w0}-\eqref{eq:defxi}, and the definition of the limit functions $\ut, \nut, \alphat, \Kt$ given above, that $\mathcal{I}$ satisfies  \eqref{eq.KretschmannLeadingOrder}.

\appendix

\section{$\Lambda$-Kasner solutions}
\label{sec:Kasnersol}

Let $\tb$, $\xb_1$, $\xb_2$, $\xb_3$ be coordinates on $\Rbb^+ \times \Tbb^3$. It is straightforward to show that \eqref{eq:KasnermetricGauss}
is a spatially homogeneous family of solutions of \eqref{eq:EFE} for arbitrary $\Lambda\in\Rbb$, which we call \emph{$\Lambda$-Kasner solutions} \cite{Garfinkle2021}, provided: (i) the relations \eqref{eq:Kasner} hold for the \emph{Kasner exponents} $p_1,p_2,p_3\in\Rbb$, (ii) the functions $\LKone(z)$ and $\LKtwo(z)$ in \eqref{eq:KasnermetricGauss} are the analytic functions on the complex plane given by
\begin{equation}
  \label{eq:K1K2}
  \LKone(z)=\frac 14(e^z+e^{-z})^2=\cosh^2(z),
  \quad 
  \LKtwo(z)=\frac{e^z-e^{-z}}{z(e^z+e^{-z})}=\frac{\tanh{z}}z,
\end{equation}
respectively, and, (iii) $z=z(\tb)$ is given by \eqref{eq:defz} 
where $\sqrt{\cdot}$ is one of the two branches of the complex square root function. Observe here that we consider $\LKtwo(z)$ as the function analytically extended from $\Cbb\backslash\{0\}$ to $\Cbb$ and that the values of the functions $\LKone(z(\tb))$ and $\LKtwo(z(\tb))$ are the same for both branches of the complex root function.
Note also that
\[\LKone(0)=\LKtwo(0)=1,\]
and that both $\LKone(z)$ and $\LKtwo(z)$ are real-valued for all $z\in\Rbb$ and $z\in i\Rbb$.
Due to these relations, it evident that the $\Lambda$-Kasner solutions reduce to the standard vacuum \emph{Kasner solutions} \cite{abio2020,kasner1921} in the case $\Lambda=0$, and that for \emph{arbitrary} $\Lambda\neq 0$, the $\Lambda$-Kasner solutions approach the standard vacuum Kasner solutions asymptotically as $\tb\searrow0$ according to
\[g^{(K)}=-\rmd \tb^2+\sum_{i=1}^3 \tb^{2p_i}(1+\Ord(z(\tb)^2))^{2p_i}dx_i^2.\]

The first fundamental form induced on the $\tb=const$-surfaces is
\begin{equation}
  \label{eq:KasnermetricGaussFirst}
  \gamma^{(K)}=\LKone^{2/3}\bigl(z(\tb)\bigr)\sum_{i=1}^3 \tb^{2p_i}\LKtwo^{2p_i}\bigl(z(\tb)\bigr)d\xb_i^2,
\end{equation}
while the Weingarten map (i.e., the mixed-component second fundamental form) is
\begin{equation}
  \label{eq:KasnerWeingarten}
  \Kbb^{(K)}=-\frac 1{\tb}\sum_{i=1}^3 \left(\frac{p_i}{\LKone\bigl(z(\tb)\bigr)\LKtwo\bigl(z(\tb)\bigr)}+\frac{2 \LKtwo\bigl(z(\tb)\bigr) z^2(\tb)}{3}\right)\partial_{\xb_i}\otimes d\xb_i.
\end{equation}
The eigenvalues of $-\tb\Kbb^{(K)}$ therefore agree with the Kasner exponents $p_1$, $p_2$ and $p_3$ in the limit $\tb\searrow0$.
Except in non-generic cases in which two of these Kasner exponents agree,
each coordinate vector field $\partial_{\xb_1}$, $\partial_{\xb_2}$, $\partial_{\xb_3}$  spans  one of the distinct $1$-dimensional eigenspaces of $\Kbb^{(K)}$ for all sufficiently small $\tb>0$, and is therefore geometrically distinguished.

The $\Lambda$-Kasner metric \eqref{eq:KasnermetricGauss} is spatially homogeneous and therefore a member of the polarised $\Tbb^2$-symmetric class; see Section~\ref{s.T2spacetimes}. 
However, \eqref{eq:KasnermetricGauss} is clearly not expressed in an areal gauge \eqref{T2metric} coordinates. 
In order to bring this metric to this gauge representation, we consider the coordinate transformation
\begin{equation}
  \label{eq:KasnerCoordTrafo}
  (t,\theta,x,y)=\Bigl(\frac {\tb^{1-p_1}}{(1-p_1)^{1-p_1}} \LKone^{2/3}\bigl(z(\tb)\bigr) \LKtwo ^{1-p_1}\bigl(z(\tb)\bigr), (1-p_1)^{p_1}\xb_1, (1-p_1)^{p_2} \xb_3, (1-p_1)^{p_3} \xb_2\Bigr).
\end{equation}
The resulting metric is of the form \eqref{T2metric} given by the following partially implicit relations
\begin{align}
  e^{2u^{(K)}}&= t^{1-K} \LKone^{-2K/3}\bigl(z(\tb(t))\bigr),\\
  e^{2\nu^{(K)}}&= t^{(1-K)^2/2}\LKone^{-(K-3)(K+1)/3}\bigl(z(\tb(t))\bigr),\\
  \alpha^{(K)}&=\left[\frac{3}{3+(K^2+3) z^2(\tb(t)) \LKone\bigl(z(\tb(t))\bigr) \LKtwo^2\bigl(z(\tb(t))\bigr)}\right]^2,\\
          Q^{(K)}&=G^{(K)}=H^{(K)}=0,
\end{align}
where $K\in\Rbb$ determines the Kasner exponents via \eqref{eq:Kasnerexpo}.
It turns out that
\begin{align}
  \label{eq:Kasnerexp1}
  e^{2u^{(K)}}&= t^{1-K}\left(1-\frac{8 \Lambda  K t^{(K^2+3)/2}}{\left(K^2+3\right)^2}+\Ord\left(\Lambda^2\, t^{K^2+3}\right)\right),\\  
  e^{2\nu^{(K)}}&= t^{(1-K)^2/2}
    \left(
      1
      -\frac{4 \Lambda\left(K-1\right)\left(K+3\right)
        t^{(K^2+3)/2}}{\left(K^2+3\right)^2}
      +\Ord\left(\Lambda^2\, t^{K^2+3}\right)
                  \right),\\
  \label{eq:Kasnerexp3}
  \alpha^{(K)}&=1-\frac{8 \Lambda   t^{(K^2+3)/2}}{K^2+3}+\Ord\left(\Lambda^2\, t^{K^2+3}\right),
\end{align}
near $t=0$.
It is clear from equation \eqref{eq:KasnerCoordTrafo} that the $t=const$-surfaces and the $\tb=const$-surfaces yield the same foliation of Kasner spacetimes. The map $\Kbb^{(K)}$ in \eqref{eq:KasnerWeingarten} is therefore the Weingarten map of the $t=const$-surfaces, and, except for degenerate cases, each coordinate vector field $\partial_{\theta}$, $\partial_{x}$, $\partial_{y}$ spans one of the distinct $1$-dimensional eigenspaces. In the degenerate cases the Kasner parameter $K$ 
takes the special values $K=0$, $K=\pm 1$, or $K=\pm 3$.

\bibliographystyle{amsplain}
\bibliography{refs}

\end{document}